\definecolor{ptblue}{RGB}{15,76,129} 
\definecolor{ptemerald}{HTML}{009473} 
\definecolor{bluegray}{rgb}{0.4, 0.6, 0.8}
\definecolor{ptilluminating}{HTML}{F5DF4D} 
\definecolor{ptgray}{HTML}{939597} 
\DeclareMathOperator*{\argmax}{arg\,max}
\DeclareMathOperator*{\argmin}{arg\,min}
\theoremstyle{plain}
\newtheorem{theorem}{Theorem}[section]
\newtheorem{corollary}[theorem]{Corollary}
\newtheorem{proposition}[theorem]{Proposition}
\newtheorem{lemma}[theorem]{Lemma}
\theoremstyle{definition}
\newtheorem{definition}[theorem]{Definition}
\newtheorem*{theorem*}{Theorem}
\theoremstyle{remark}
\definecolor{cobalt}{rgb}{0.0, 0.28, 0.67}
\newcommand{\EF}[1]{\if\relax\detokenize\expandafter{\@firstofone#1{}}\relax EF\xspace\else EF#1\fi}
\newcommand{\calA}{\mathcal{A}}
\newcommand{\calB}{\mathcal{B}}
\newcommand{\calI}{\mathcal{I}}
\newcommand{\A}{\mathcal{A}}
\newcommand{\alloc}{\mathcal{A}}
\newcommand{\allocn}{\alloc = (A_1,\dots A_n)}
\newcommand{\GenInstance}{\calI=\langle N, M,\{v_i\}_{i \in N}\rangle}
\newcommand{\Partition}{\textup{\textsc{Partition}}}
\newcommand{\RestrictedPartition}{\textup{\textsc{Restricted-Partition}}}
\newcommand{\EQ}[1]{\ifstrempty{#1}{\textrm{\textup{EQ}}}{\textrm{\textup{EQ{$#1$}}}}}
\title{\bfseries
The Landscape of Almost Equitable Allocations}
\author{
\textbf{Hadi Hosseini} \\ 
Pennsylvania State University, USA\\ 
\texttt{hadi@psu.edu}
\and 
\textbf{Vishwa Prakash HV}\\
Chennai Mathematical Institute, Chennai\\
\texttt{vishwa@cmi.ac.in}
\and 
\textbf{Aditi Sethia}\\
Indian Institute of Science, Bangalore\\
\texttt{aditisethia@iisc.ac.in} 
\and
\textbf{Jatin Yadav}\\
Indian Institute of Technology, Delhi\\
\texttt{jatin.yadav@cse.iitd.ac.in}
} 
\date{}
\begin{document}
\maketitle
\begin{abstract}
Equitability is a fundamental notion in fair division which requires that all agents derive equal value from their allocated bundles. We study, for general (possibly non-monotone) valuations, a popular relaxation of equitability known as \textit{equitability up to one item} (\EQ{1}). An \EQ{1} allocation may fail to exist even with additive non-monotone valuations; for instance, when there are two agents, one valuing every item positively and the other negatively. This motivates a mild and natural assumption: all agents agree on the sign of their value for the \emph{grand bundle}. Under this assumption, we prove the existence and provide an efficient algorithm for computing \EQ{1} allocations for two agents with general valuations. When there are more than two agents,  we show the existence and polynomial-time computability of \EQ{1} allocations for valuation classes beyond additivity and monotonicity, in particular for (1) \emph{doubly monotone} valuations and (2) \emph{submodular} (resp. \emph{supermodular}) valuations where the value for the grand bundle is \textit{non-negative} (resp. \textit{nonpositive}) for all agents. Furthermore, we settle an open question of Bil\`o et al. by showing that an \EQ{1} allocation always exists for non-negative (resp. nonpositive) valuations, i.e., when every agent values each subset of items non-negatively (resp. nonpositively).
Finally, we complete the picture by showing that for general valuations with more than two agents, \EQ{1} allocations may not exist even when agents agree on the sign of the grand bundle, and that deciding the existence of an \EQ{1} allocation is computationally intractable.
\end{abstract}

\section{Introduction}\label{sec:intro}

A central problem in multiagent systems concerns the fair allocation of resources, tasks, or items among agents with heterogeneous preferences. At its core lies a simple yet fundamental question: \emph{how should such resources be divided fairly?} The field of fair division provides a formal framework for studying this question and has introduced variety of fairness axioms, particularly in settings involving indivisible items.
Among these, \textit{equitability} \cite{Dubins1961} is a compelling and well-studied notion, requiring that all agents derive equal subjective value from their allocated bundles. Motivated by empirical evidence \cite{Herreiner2009,HERREINER2010238}, equitability captures an \textit{interpersonal} notion of fairness by emphasizing equality in experienced happiness among agents. 

The vast majority of existing work assumes that agents' valuations are \textit{monotonic}, i.e., receiving additional items monotonically increases (in case of goods) or decreases (in case of chores) an agent's utility.
However, many realistic settings involve \textit{non-monotonic} or even \textit{non-additive} valuations. Moreover, whether the inclusion of a given item increases or decreases an agent's utility, could depend on the subset of items she already owns. For example, consider a firm looking to hire a new employee. While this individual might be highly valuable in isolation, the marginal utility of bringing them on could be negative if the team already has several people with the same expertise, or if the new hire’s work style is not aligned with the team. The value the firm derives from the employee depends not only on their abilities, but also on the existing team composition and dynamics.
Similarly, in parallel processing, adding more processors does not always improve performance. Initially, tasks complete faster as workload divides, but beyond a point, communication overhead and synchronization costs dominate --- each new processor adds more waiting than work. 


Motivated by these examples, we study the fairness notion of equitability under the most \textit{general valuation} functions, extending beyond standard monotonic or additive assumptions.
Our focus is on its prominent relaxation, \textit{equitability up to one item} (\EQ{1}), which requires that any inequality between two agents can be eliminated by removing at most one item from an agent's bundle.

When valuations are additive and monotone, an \EQ{1} allocation is known to exist when all items are goods for all agents \cite{FreemanGoods19}, and likewise when all items are chores \cite{FreemanChores20}. Even in \emph{mixed-manna} settings, if agents agree on whether each item is a good or a chore, an \EQ{1} allocation is guaranteed to exist \cite{hosseini2025}.\footnote{Such valuations are called \emph{objective}, meaning that no item is valued positively by some agents and negatively by others.} In contrast, if agents disagree on the sign of items' marginal values, an \EQ{1} allocation need not exist. For instance, consider two agents with additive valuations over more than two items: one assigns a value of \(1\) to every item, while the other assigns \(-1\). In this case, no complete allocation can satisfy \EQ{1}.

Motivated by this impossibility, we introduce a mild assumption: all agents agree on the sign of the \textit{grand bundle}, that is, whether the collection of all items is overall desirable or undesirable. Under this assumption, we ask the following theoretical questions:
\textit{For which classes of valuation functions is an \EQ{1} allocation guaranteed to exist, and do the corresponding decision problems admit efficient algorithms?}

\subsection{Our Contributions}
We study the existence and computation of \EQ{1} allocations under general valuations and several subclasses, assuming that all agents agree on the sign of their value for the grand bundle. Figure~\ref{fig:venn_comparison} presents the resulting landscape of existence across major valuation classes. Specifically, we examine  \EQ{1} allocations for (i) general valuations without additional assumptions, (ii) doubly monotone valuations where items can be partitioned into goods and chores, (iii) submodular and supermodular valuations, (iv) subadditive and superadditive valuations, and (v) nonnegative and nonpositive valuations.

For clarity of exposition, we first present all our results under the assumption that all agents value the grand bundle \emph{nonnegatively}. We then establish a technical result (\Cref{thm:nonpos_to_nonneg}) that extends these findings to the complementary case in which the grand bundle is valued nonpositively (\Cref{sec:nonpos}), thereby providing a complete characterization of \EQ{1} allocations.

\paragraph{General valuations.} We show that, for three or more agents with supermodular valuations (and therefore arbitrary general valuations), an \EQ{1} allocation may not exist—even when all agents value the grand bundle nonnegatively. Moreover, the corresponding decision problem is NP-complete (\Cref{thm:hard_supermodular}).  In contrast, for two agents, an \EQ{1} allocation always exists and can be computed in polynomial time whenever both agents agree on the sign of the grand bundle (\Cref{thm:two_agents}, \Cref{thm:two_agents_nonpos}).

\paragraph{Doubly monotone and submodular valuations.} We introduce a new class of valuations---those satisfying the \emph{marginal witness property}.\footnote{Formal definition in \Cref{sec:submodular_dmonotone}.} For this class, when all agents value the grand bundle nonnegatively, an \EQ{1} allocation always exists and can be computed in polynomial time. Since both \emph{doubly monotone} (\Cref{thm:doublymonotone}) and \emph{submodular} (\Cref{thm:submodular}) valuations also satisfy this property and hence are contained within this class, our results immediately imply existence and efficient computation of \EQ{1} allocations for these two important valuation classes.
 
\paragraph{Nonnegative valuations.}A particular valuation class that has recently received attention in the literature \cite{bilò2025,barman2025eq} is that of \emph{nonnegative valuations}, where every subset of items has nonnegative value, implying that agents never view any bundle as a net loss. Within this class, recent works study weaker relaxations of equitability obtained via rounding equitable cake divisions and fixed-point arguments \cite{bilò2025,barman2025eq}. Specifically, \cite{bilò2025} achieve equitability after removing at most two items (at most one from each bundle), while \cite{barman2025eq} guarantee equitability after the removal or addition of at most three items. Both works leave the existence of \EQ{1} allocations as an intriguing open question. In this work, we resolve this question positively by proving, through a combinatorial approach, that \EQ{1} allocations always exist (\Cref{thm:nonneg}).

\paragraph{Identical subadditive valuations.} 
We show the existence of \EQ{1} allocations for identical subadditive valuations when agents value the grand bundle nonnegatively (\Cref{thm:identical_subadd}). 
This finding strengthens the result of~\cite{barman2025eq}. 

Since \EQ{1} and \EF{1}\footnote{An allocation is envy-free (EF) if no agent values another agent's bundle more than its own. Analogous to \EQ{1}, \EF{1} is a relaxation where envy, if any, can be eliminated by removing at most one item.} coincide under identical valuations, this result immediately implies the existence of \EF{1} under identical subadditive valuations (\Cref{cor:identical}).


\begin{figure*}[h]
    \centering
    \begin{subfigure}[b]{0.49\textwidth}
        \centering
        \scalebox{0.65}{\tikzset{every picture/.style={line width=0.75pt}} 

\begin{tikzpicture}[x=0.75pt,y=0.75pt,yscale=-1,xscale=1]

\draw  [fill={rgb, 255:red, 214; green, 235; blue, 255 }  ,fill opacity=0.47 ] (260,10) -- (400,10) -- (400,290) -- (260,290) -- cycle ;
\draw  [fill={rgb, 255:red, 184; green, 184; blue, 184 }  ,fill opacity=0.23 ] (290,131.32) -- (527.8,131.32) -- (527.8,340) -- (290,340) -- cycle ;
\draw  [fill={rgb, 255:red, 189; green, 16; blue, 224 }  ,fill opacity=0.13 ] (290,131.32) -- (495.8,131.32) -- (495.8,260) -- (290,260) -- cycle ;
\draw  [fill={rgb, 255:red, 184; green, 233; blue, 134 }  ,fill opacity=0.25 ] (160,84.72) -- (370,84.72) -- (370,220) -- (160,220) -- cycle ;
\draw  [fill={rgb, 255:red, 249; green, 229; blue, 229 }  ,fill opacity=0.39 ] (140,180) -- (610,180) -- (610,220) -- (140,220) -- cycle ;
\draw  [line width=2.25]  (290,131.32) -- (370,131.32) -- (370,220) -- (290,220) -- cycle ;

\draw (292,134.32) node [anchor=north west][inner sep=0.75pt]  [font=\small] [align=left] {Additive $\checkmark$ };
\draw (405,134) node [anchor=north west][inner sep=0.75pt]  [font=\small] [align=left] {Submodular $\checkmark$};
\draw (422,299) node [anchor=north west][inner sep=0.75pt]  [font=\small] [align=left] {Subadditive $\displaystyle ?$};
\draw (540,182) node [anchor=north west][inner sep=0.75pt]  [font=\small] [align=left] {Non-neg $\checkmark$};
\draw (262,13) node [anchor=north west][inner sep=0.75pt]  [font=\small] [align=left] {Doubly-monotone $\checkmark$ };
\draw (545,202) node [anchor=north west][inner sep=0.75pt]  [font=\small] [align=left] {[Thm~\ref{thm:nonneg}]};
\draw (430,152) node [anchor=north west][inner sep=0.75pt]  [font=\small] [align=left] {[Thm~\ref{thm:submodular}]};
\draw (271,32) node [anchor=north west][inner sep=0.75pt]  [font=\small] [align=left] {[Thm~\ref{thm:doublymonotone}]};
\draw (301,152) node [anchor=north west][inner sep=0.75pt]  [font=\small] [align=left] {[Thm~\ref{thm:doublymonotone}]};
\draw (385,319) node [anchor=north west][inner sep=0.75pt]  [font=\small] [align=left] {(Identical $\checkmark$ ) [Thm~\ref{thm:identical_subadd}]};
\draw (162,87.72) node [anchor=north west][inner sep=0.75pt]  [font=\small] [align=left] {Supermodular$\displaystyle \times $};
\draw (170,105) node [anchor=north west][inner sep=0.75pt]  [font=\small] [align=left] {[Thm~\ref{thm:hard_supermodular}]};

\end{tikzpicture}}
        \caption{Agents value the grand bundle nonnegatively.
        }
        \label{fig:venn_pos}
    \end{subfigure}
    \hspace{-0.2cm}
    \begin{subfigure}[b]{0.49\textwidth}
        \centering
        \scalebox{0.65}{
        \tikzset{every picture/.style={line width=0.75pt}} 

\begin{tikzpicture}[x=0.75pt,y=0.75pt,yscale=-1,xscale=1]

\draw  [fill={rgb, 255:red, 214; green, 235; blue, 255 }  ,fill opacity=0.47 ] (270,00) -- (410,00) -- (410,310) -- (270,310) -- cycle ;
\draw  [fill={rgb, 255:red, 184; green, 184; blue, 184 }  ,fill opacity=0.23 ] (133,70) -- (380,70) -- (380,250) -- (133,250) -- cycle ;
\draw  [fill={rgb, 255:red, 189; green, 16; blue, 224 }  ,fill opacity=0.13 ] (300,161.32) -- (505.8,161.32) -- (505.8,290) -- (300,290) -- cycle ;
\draw  [fill={rgb, 255:red, 184; green, 233; blue, 134 }  ,fill opacity=0.25 ] (166,130) -- (380,130) -- (380,250) -- (166,250) -- cycle ;
\draw  [fill={rgb, 255:red, 249; green, 229; blue, 229 }  ,fill opacity=0.39 ] (60,210) -- (520,210) -- (520,250) -- (60,250) -- cycle ;
\draw  [line width=2.25]  (300,161.32) -- (380,161.32) -- (380,250) -- (300,250) -- cycle ;

\draw (302,164.32) node [anchor=north west][inner sep=0.75pt]  [font=\small] [align=left] {Additive $\checkmark$ };
\draw (416,164) node [anchor=north west][inner sep=0.75pt]  [font=\small] [align=left] {Submodular$\displaystyle \times $};
\draw (136.2,73) node [anchor=north west][inner sep=0.75pt]  [font=\small] [align=left] {Superadditive $\displaystyle ?$};
\draw (64,213) node [anchor=north west][inner sep=0.75pt]  [font=\small] [align=left] {Non-pos $\checkmark$};
\draw (272,05) node [anchor=north west][inner sep=0.75pt]  [font=\small] [align=left] {Doubly-monotone $\checkmark$};
\draw (75,232) node [anchor=north west][inner sep=0.75pt]  [font=\small] [align=left] {[Cor~\ref{thm:nonpos}]};
\draw (440,182) node [anchor=north west][inner sep=0.75pt]  [font=\small] [align=left] {[Cor~\ref{thm:hard_submodular}]};
\draw (281,22) node [anchor=north west][inner sep=0.75pt]  [font=\small] [align=left] {[Cor~\ref{thm:doublymonotone_nonpos}]};
\draw (311,182) node [anchor=north west][inner sep=0.75pt]  [font=\small] [align=left] {[Cor~\ref{thm:doublymonotone_nonpos}]};
\draw (136,92) node [anchor=north west][inner sep=0.75pt]  [font=\small] [align=left] {(Identical $\checkmark$) [Cor~\ref{thm:superadditive_nonpos}]};
\draw (168,133) node [anchor=north west][inner sep=0.75pt]  [font=\small] [align=left] {Supermodular $\checkmark$};
\draw (175,155) node [anchor=north west][inner sep=0.75pt]  [font=\small] [align=left] {[Cor~\ref{thm:supermodular_nonpos}]};

\end{tikzpicture}}
        \caption{Agents value the grand bundle nonpositively.}
        \label{fig:venn_neg}
    \end{subfigure}
    \caption{A pictorial representation of various valuation classes, their intersections, and an overview of our results. $\checkmark$ implies existence of \EQ{1} allocations, $\times$ implies non-existence, $?$ implies that the existence is an open question. For two agents, we show the existence and efficient computation of \EQ{1} allocations under \textit{general valuations} when the grand bundle is valued nonnegatively (resp. nonpositively) (\Cref{thm:two_agents}).
    }
    \label{fig:venn_comparison}
\end{figure*}
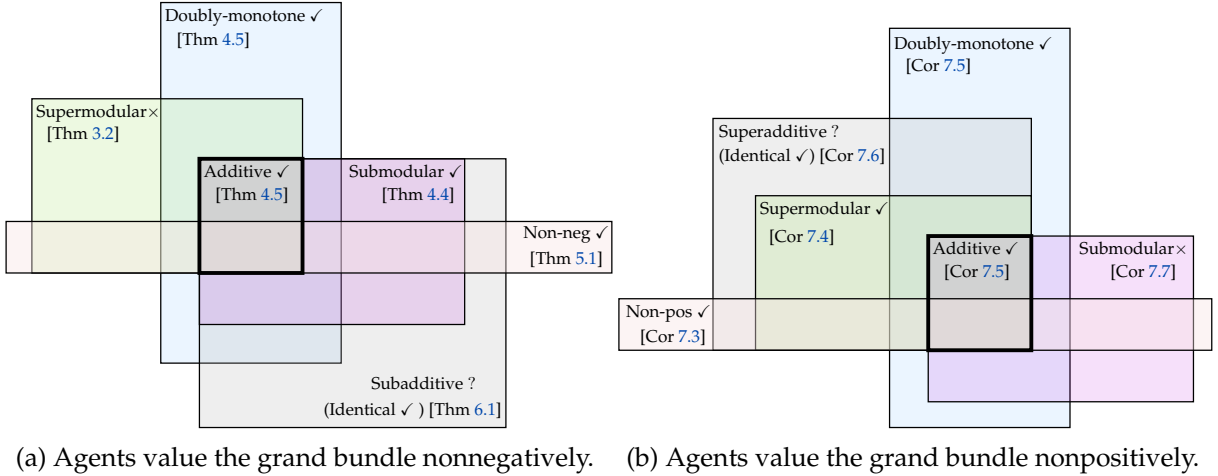

\subsection{Related Work}

\paragraph{Divisible Items.} When the items are divisible, \citeauthor{Dubins1961}~\cite{Dubins1961} showed that an equitable division always exists for additive valuations. \citeauthor{CECHLAROVA2013}~\cite{CECHLAROVA2013}  showed that connected EQ allocations exist for valuation functions that are nonnegative, non-decreasing, and continuous, and in such settings, nearly equitable allocations can be computed efficiently \cite{CECHLAROVA2012}. 
Ch\`eze \cite{CHEZE2017} gave a simpler and shorter existence proof based on a fixed-point result (Borsuk-Ulam Theorem). \citeauthor{SK23}~\cite{SK23} showed the existence for general identical valuations.
\citeauthor{AumannDombb2015}~\cite{AumannDombb2015} showed that there is a connected equitable division that also maximizes the egalitarian welfare.
Recently, \citeauthor{BSV2025}~\cite{BSV2025} showed the existence of \EQ{} allocations for nonnegative valuations using Sperner's Lemma. Their proof also generalizes to subclasses of possibly negative valuations like identical valuations and single-peaked valuations. The above results are only existential, \citeauthor{PW17}~\cite{PW17} showed that there is no bounded runtime algorithm for finding an equitable division, even without the connectedness constraint.

\paragraph{Indivisible Items.} When the items are indivisible, an exact EQ allocation may not exist, but \EQ{1} allocations are known to exist and are efficiently computable for monotone additive valuations \cite{FreemanGoods19,FreemanChores20}. Under additive valuations, several works have also studied equitability in conjunction with efficiency guarantees \cite{FreemanGoods19,GargM24}, and welfare trade-offs \cite{CKKK2012,SCDV23,BMSV23}.

\paragraph{Equitability Beyond Additivity and/or Monotonicity.} Recently, approximate equitability has been considered in the non-additive and/or non-monotone settings. \citeauthor{barman2024nearly}~\cite{barman2024nearly} considered a stronger relaxation, EQX, where the removal of \emph{any} good from the rich agent's bundle or the removal of \emph{any} chore from the poor agent's bundle gets rid of the inequity. They showed that EQX exists for monotone valuations and can be efficiently computed for \emph{weakly well-layered valuations.} For non-monotone valuations, they showed the existence of EQX allocations for the case of two agents with additive valuations where each
item is either a good for both agents or a chore for both agents. \citeauthor{hosseini2025}~\cite{hosseini2025} showed the existence and efficient computation of \EQ{1} allocations when every item is valued at $\{-\alpha, 0, \alpha\}$ and the valuations are additive. In \cite{barman2025eq}, \citeauthor{barman2025eq} study a weaker notion of equitability under nonnegative valuations. They show the existence of allocations where the difference between the utilities of any two agents can be eliminated by the removal or addition of at most three items. \citeauthor{bilò2025}~\cite{bilò2025} prove a stronger result by showing the existence of $\EQ{1}_g^c$ allocations\footnote{An allocation is $\EQ{1}_g^c$ if for every pair of agents equitability can be guaranteed by the removal of up to removal of one item \emph{each} from their bundles \cite{bilò2025}. } for nonnegative valuations.

\paragraph{Envy-Freeness Beyond Additivity and/or Monotonicity.}
\EF{1} allocations have been known to exist for general monotone valuations and can be computed efficiently with envy-cycle eliminations \cite{LMM04}. For non-monotone but additive valuations, \citeauthor{Aziz2021}~\cite{Aziz2021} showed the existence and efficient computation of \EF{1} by double round-robin algorithm. \citeauthor{Bhaskar2020}~\cite{Bhaskar2020} extended this to doubly monotone valuations. Recently, \citeauthor{BKPR25}~\cite{BKPR25} showed that \EF{1} always exists for (i) identical trilean valuations and (ii) a newly introduced class of valuations--\emph{separable single-peaked (SSP)
valuations.} \citet{hosseini25} considered \EF{1} allocation of the vertices of a graph among the agents, where the value of a bundle is determined by its cut value, capturing settings where valuations are inherently non-monotonic.

\section{Preliminaries}\label{sec:prelim}

\paragraph{The Setting.} An instance of a fair division problem is given by a tuple $\GenInstance$, where $N$ is a set of $n \in \mathbb{N}$ agents and $M$ is a set of $m \in \mathbb{N}$ indivisible items. For the sake of brevity, we will often assume $N = \{1, 2, \ldots n\}$.
Every agent $i$ has a valuation function $v_i : 2^M \to \mathbb{R}$, associating a real value to every set $S \subseteq M$ of items, denoted by $v_i(S)$. Throughout the paper, we assume a polynomial time oracle access to the agent valuation functions.
An allocation $\allocn$ is a partition of $M$ into $n$ bundles, where $A_i$ denotes the bundle allocated to agent $i$.

\paragraph{Valuation Classes.} The marginal value of an item $o$, for an agent \(i\), with respect to a subset $S \subseteq M$ is defined as $v_i(S|e) = v_i(S \cup \{e\}) - v_i(S)$. We consider non-monotone valuations, meaning that, for an agent, an item may have either positive, negative, or zero marginal value with respect to a given subset of items. We additionally assume that the sign of the valuation of the \textit{grand bundle} (the entire set $M$ of items) is the same for all agents. Formally, either $v_i (M) \ge 0$ for all agents $i \in N$, or $v_i (M) \le 0$ for all agents $i \in N$. This assumption is quite natural and reasonable; without it,\EQ{1} allocations may fail to exist even in simple cases with just two items and two agents with additive valuations, one valuing each item at 1, another at -1. Here, no (complete) allocation can satisfy\EQ{1}.

We next define the classes of valuations that we consider in this work. 

\begin{enumerate}
    \item \emph{Additive}: A valuation function $f$ is said to be \emph{additive} if the value of a bundle $S$ of items is equal to the sum of the values of the items in $S$. Formally, $f(S) = \sum_{e \in S} f(e)$ for all subsets $S \subseteq M$.

    \item \emph{Submodular / Supermodular}: A valuation function $f$ is said to be \emph{submodular} if it satisfies the property of diminishing marginal value. Formally, for subsets $S, T \subseteq M$ such that $S \subseteq T$, and any item $e \notin T$, we have $f(S \cup e) - f(S) \ge f(T \cup e) - f(T)$. Likewise, a valuation function $f$ is said to be \emph{supermodular} if it satisfies the property of increasing marginal value. Formally, for subsets $S, T \subseteq M$ such that $S \subseteq T$, and any item $e \notin T$, we have $f(S \cup e) - f(S) \le f(T \cup e) - f(T)$.

    \item \emph{Subadditive}: We say that a valuation function $f$ is said to be \emph{subadditive} if for any two \textbf{disjoint} subsets $S, T \subseteq M$, we have $f(S \cup T) \le f(S) + f(T)$. Note that, this is a weaker condition than the standard definition of subadditivity, which requires the above condition to hold for any two (not necessarily disjoint) subsets $S, T \subseteq M$.\footnote{Putting $S = T$ in the standard definition, we get $f(S) \le 2 f(S)$, or equivalently, $f(S) \ge 0$ for all subsets $S \subseteq M$. Thus, the standard definition of subadditivity implies that the valuation function is nonnegative. However, in this work, we consider more general subadditive valuation functions that may take negative values as well.}

    \item \emph{Superadditive}: A valuation function $f$ is said to be \emph{superadditive} if for any two \textbf{disjoint} subsets $S, T \subseteq M$, we have $f(S \cup T) \ge f(S) + f(T)$.

    \item \emph{Doubly Monotone}: A valuation function $f$ is said to be \emph{doubly monotone}, if the set of items $M$ can be partitioned into \emph{goods} $G$ and \emph{chores} $C$, such that $M = G \sqcup C$ and for all $S \subseteq M$, we have $f(S \cup g) \geq f(S)$ for each good $g \in G$ and $f(S \cup c) \leq f(S)$ for each chore $c \in C$. It is important to note that in an instance of fair division with doubly monotone valuations, different agents may have different sets of goods and chores.
    
    \item \emph{nonnegative / nonpositive}: A valuation function $f$ is said to be \emph{nonnegative} if the value of any subset of items is a nonnegative real number. Formally, if $f(S) \ge 0$ for all subsets $S \subseteq M$. Likewise, a valuation function $f$ is said to be \emph{nonpositive} if the value of any subset of items is a nonpositive real number. Formally, if $f(S) \le 0$ for all subsets $S \subseteq M$. 

\end{enumerate}

The general containment of valuation classes is as follows.
$$\text{Additive} \subset \text{Submodular} \subset \text{Subadditive}$$ 
$$ \text{Additive} \subset \text{Doubly Monotone}$$ 
$$\text{Additive} \subset \text{Supermodular} \subset \text{Superadditive}$$

\paragraph{Fairness Notion.}
An allocation $\allocn$ is said to be \textit{equitable} (EQ) if all the agents derive equal value from their respective bundles, that is, for every pair of agents $i$ and $j$, we have $v_i(A_i) = v_j(A_j)$. An EQ allocation need not always exist. Thus, we consider a popular relaxation of EQ, called as\EQ{1}. An allocation $\mathcal{A}$ is said to be \textit{equitable up to one item} (EQ1) if for every pair of agents $i$ and $j$ such that $v_i(A_i) < v_j(A_j)$, either there exists some item $g$ in $A_j$ such that $v_i(A_i) \geq v_j(A_j \setminus \{g\})$ or there exists some item $c$ in $A_i$ that satisfies $v_i(A_i \setminus \{c\}) \geq v_j(A_j)$.

We now introduce a stronger notion of \emph{lower\EQ{1} witness} that implies\EQ{1}. This captures\EQ{1} through the existence of a single \emph{witness value} that represents an equitable value level. Intuitively, an allocation admits a \emph{lower\EQ{1} witness} if all agents value their bundles at least as much as some common threshold $\theta$, and the valuation of each agent could fall below that threshold by the agent losing at most one item.

\begin{definition}[Lower\EQ{1} Witness]\label{def:lower-eq1-witness}
An allocation $\mathcal{A}$ is said to \emph{admit a lower\EQ{1} witness} $\theta \in \mathbb{R}$ if the following conditions hold:
\begin{enumerate}
    \item For every agent $i$, $v_i(A_i) \ge \theta$, and
    \item For every agent $i$, either $v_i(A_i) = \theta$, or there exists an item $g \in A_i$ such that $v_i(A_i \setminus \{g\}) \le \theta$.
\end{enumerate}
\end{definition}

The existence of a lower witness guarantees that all agents are “almost” at a common value level—within the change caused by one item. Hence, it immediately implies \EQ{1}.

\begin{proposition}\label{prop:witness-implies-eq1}
If an allocation $\mathcal{A}$ admits a lower\EQ{1} witness, then $\mathcal{A}$ is\EQ{1}.
\end{proposition}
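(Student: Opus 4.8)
The plan is to verify \EQ{1} directly from Definition~\ref{def:lower-eq1-witness} by case analysis on an arbitrary ordered pair of agents. Fix an allocation $\mathcal{A}$ that admits a lower\EQ{1} witness $\theta$, and take any pair $i, j$ with $v_i(A_i) < v_j(A_j)$. We must exhibit either a good $g \in A_j$ with $v_i(A_i) \ge v_j(A_j \setminus \{g\})$ or a chore $c \in A_i$ with $v_i(A_i \setminus \{c\}) \ge v_j(A_j)$.

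First I would apply condition~2 of the witness definition to agent $j$. Since $v_j(A_j) \ge \theta$ and, by assumption, $v_j(A_j) > v_i(A_i) \ge \theta$, we have $v_j(A_j) \ne \theta$, so condition~2 forces the existence of an item $g \in A_j$ with $v_j(A_j \setminus \{g\}) \le \theta$. Combining this with condition~1 applied to agent $i$, namely $v_i(A_i) \ge \theta$, gives $v_i(A_i) \ge \theta \ge v_j(A_j \setminus \{g\})$, which is exactly the first disjunct of the \EQ{1} condition for the pair $(i,j)$. Hence $\mathcal{A}$ is\EQ{1}.

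There is essentially no obstacle here: the statement is an immediate consequence of the definitions, and the only thing to be slightly careful about is the strict-versus-nonstrict inequality that rules out the $v_j(A_j) = \theta$ branch of condition~2. Notably, the chore-removal disjunct is never even needed — the witness $\theta$ sits below every bundle value, so removing one item from the richer agent always suffices. I would present the argument in two or three sentences without any further machinery.
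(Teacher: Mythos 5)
Your proof is correct and precisely fills in the argument the paper treats as immediate (the paper gives no explicit proof, only the remark that a lower witness "immediately implies" \EQ{1}). Your case analysis — using condition~1 on the poorer agent and condition~2 on the richer agent, noting that strictness rules out the $v_j(A_j)=\theta$ branch — is the natural and intended route.
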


\begin{definition}[Rich and Poor Agents]
    Given an allocation  $\allocn$, an agent $i$ is said to be a \emph{rich} agent if $v_i(A_i) \geq v_j(A_j)$ for all $j \in N$. Similarly, an agent $i$ is said to be a \emph{poor} agent if $v_i(A_i) \leq v_j(A_j)$ for all $j \in N$.
\end{definition}

\subsection{Organization of the Paper.}
We first consider the case of nonnegatively valued grand bundle in Sections \ref{sec:general_agents}-\ref{sec:identical_subadditive}.
We begin with the case of general valuations in \Cref{sec:general_agents}, followed by doubly monotone valuations and submodular valuations in \Cref{sec:submodular_dmonotone}, nonnegative valuations in \Cref{sec:nonneg}, and identical subadditive valuations in \Cref{sec:identical_subadditive}. 
The analogous case where the grand bundle is valued nonpositively appears in~\Cref{sec:nonpos}.

\section{General Valuations}
\label{sec:general_agents}
In this section, we consider general valuations with nonnegative grand bundles ($v_i(M) \geq 0~\forall~i \in N$). We show that for two agents, an \EQ{1} allocation always exists and can be efficiently computed. 
On the other hand, for more than two agents, we exhibit non-existence and prove that its corresponding decision problem is NP-complete.

~\subsection{Two Agents: An Existence Result}

We first present an efficient algorithm for the case of two agents.

\begin{theorem}
\label{thm:two_agents}
Given a fair division instance $\GenInstance$ with two agents such that the grand bundle is valued nonnegatively by both the agents, an \EQ{1} allocation always exists and can be computed efficiently.
 \end{theorem}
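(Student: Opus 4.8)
The plan is to exploit a one-dimensional sweep over allocations, maintaining a single candidate witness value throughout. Order the items arbitrarily as $e_1, e_2, \dots, e_m$. For each prefix index $k \in \{0, 1, \dots, m\}$, consider the allocation $\mathcal{A}^k$ that gives agent $1$ the prefix $\{e_1, \dots, e_k\}$ and agent $2$ the complementary suffix $\{e_{k+1}, \dots, e_m\}$. Define the ``gap'' function $g(k) = v_1(\{e_1, \dots, e_k\}) - v_2(\{e_{k+1}, \dots, e_m\})$. At $k = 0$ we have $g(0) = -v_2(M) \le 0$, and at $k = m$ we have $g(m) = v_1(M) \ge 0$, using the nonnegativity of the grand bundle for both agents. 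Hence $g$ changes sign (weakly) as $k$ runs from $0$ to $m$, so there is an index $k^\star$ where $g(k^\star - 1) \le 0 \le g(k^\star)$, i.e.\ a single-item move that straddles equitability. This $k^\star$ is found by a linear scan with $O(m)$ oracle calls, giving efficiency for free.

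The core of the argument is then a case analysis at the straddle point $k^\star$ to show that either $\mathcal{A}^{k^\star - 1}$ or $\mathcal{A}^{k^\star}$ is \EQ{1}; I expect to phrase this via a lower \EQ{1} witness (\Cref{def:lower-eq1-witness}, \Cref{prop:witness-implies-eq1}) or its natural ``upper'' analogue. Write $B = \{e_1, \dots, e_{k^\star}\}$, so at step $k^\star$ agent $1$ holds $B$ with value $v_1(B)$ and agent $2$ holds $M \setminus B$ with value $v_2(M \setminus B)$; at step $k^\star - 1$, the item $e_{k^\star}$ moves from agent $1$ to agent $2$. Since $g(k^\star) \ge 0$ means $v_1(B) \ge v_2(M \setminus B)$, agent $1$ is (weakly) richer at step $k^\star$; we need to certify that removing one item from agent $1$ drops her to at most agent $2$'s level. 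The single item to remove is precisely $e_{k^\star}$: after removing it, agent $1$ holds $B \setminus \{e_{k^\star}\} = \{e_1, \dots, e_{k^\star-1}\}$, and $g(k^\star - 1) \le 0$ says exactly $v_1(\{e_1,\dots,e_{k^\star-1}\}) \le v_2(M \setminus \{e_1,\dots,e_{k^\star-1}\}) = v_2((M \setminus B) \cup \{e_{k^\star}\})$. Symmetrically at step $k^\star - 1$, agent $2$ is the richer one, and removing $e_{k^\star}$ from \emph{her} bundle brings her value down to the level of the other. One of these two allocations will have the removed-item inequality pointed the right way relative to the poorer agent's current value; I would split on whether $v_2(M \setminus B) \ge v_1(\{e_1,\dots,e_{k^\star-1}\})$ or not, and in each branch verify the \EQ{1} condition directly from the two inequalities $g(k^\star-1)\le 0$ and $g(k^\star)\ge 0$.

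The step I expect to be the main obstacle is handling the two agents symmetrically and non-monotonically at once: for general (non-monotone) valuations, removing an item can \emph{increase} a bundle's value, so ``richer agent loses an item'' does not automatically mean her value decreases, and the witness-threshold framing must be set up carefully so that the relevant inequality is the one the sign change of $g$ actually provides. Concretely, the subtlety is that the two candidate allocations $\mathcal{A}^{k^\star-1}$ and $\mathcal{A}^{k^\star}$ may each fail \EQ{1} for a ``different reason,'' and I need to argue that at least one of them is fine; the clean way is to observe that the pair of inequalities $v_1(\{e_1,\dots,e_{k^\star-1}\}) \le v_2((M\setminus B)\cup\{e_{k^\star}\})$ and $v_1(B) \ge v_2(M\setminus B)$ together say that the common ``target'' value $\theta$ sits between $v_2(M \setminus B)$ and $v_1(B)$ in one allocation and between the corresponding two values in the other, so one of the two allocations realizes $\theta$ as a genuine lower (or upper) \EQ{1} witness. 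Once this bookkeeping is in place, correctness follows immediately, and the algorithm is simply the linear scan for $k^\star$ followed by outputting whichever of the two adjacent allocations passes the \EQ{1} check.
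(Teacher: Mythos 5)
Your proposal is correct and is essentially the same as the paper's: order the items, sweep a prefix/suffix split, locate the index where the gap $v_1(\text{prefix})-v_2(\text{suffix})$ changes sign, and argue via a two-way case split on whether removing the straddling item $e_{k^\star}$ from the richer side already certifies \EQ{1} (if yes, $\mathcal{A}^{k^\star}$ works; if not, $\mathcal{A}^{k^\star-1}$ does, because then $v_2(M\setminus B) < v_1(B\setminus\{e_{k^\star}\})$ and $g(k^\star-1)\le 0$ together verify the condition). The paper packages this as ``check whether $\mathcal{A}$ is \EQ{1}; otherwise output $\mathcal{B}$'' and handles $v_1(M)=0$ as a separate trivial case, but the sweep, the sign-change index, and the final case analysis are the same.
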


 \paragraph{Algorithm Overview.} We begin by checking a trivial case: if agent 1 assigns zero total value to all items, give everything to agent 1 and we are done. Otherwise, imagine the items laid out in some fixed order on a table. A ``knife'' sweeps from left to right, moving items from the right pile to the left pile one at a time. After each move, the left pile is what agent 1 would receive and the right pile is what agent 2 would receive. We keep moving the knife one item at a time until, for the first time, agent 1’s value for the left pile becomes strictly larger than agent 2’s value for the right pile. Call that first stopping point position $i$. At that moment we have two neighboring cut points to consider: just after item $i - 1$ or just after item $i$. It turns out that at least one of these two allocations is \EQ{1}, and we can check which one in polynomial time. The full details are in Algorithm~\ref{alg:two_agents}.

\begin{algorithm}[ht]
\SetAlgoLined
\DontPrintSemicolon
\KwIn{An instance $\GenInstance$ with two agents such that $v_i(M) \geq 0$ for $i \in \{1, 2\}$.}
\KwOut{An \EQ{1} allocation.}
\If{$v_1(M)=0$}{
    \Return $(M,\emptyset)$
}

Order the items arbitrarily as $e_1,e_2,\dots,e_m$.\;
\For{$t \in \{0, 1, \ldots m\}$}{
    $S_t \gets \{e_1,\dots,e_t\}$\;
    $T_t \gets M \setminus S_t$\;
}
$i \gets \min\{t \in \{1,\dots,m\} \mid v_1(S_t) > v_2(T_t)\}$\;

$\calA \gets (S_i, T_i)$\;
$\calB \gets (S_{i-1}, T_{i-1})$\;

\If{$\calA$ is \EQ{1}}{
    \Return $\calA$
}
\Return $\calB$
\caption{Two Agents.}
\label{alg:two_agents}
\end{algorithm}

    \begin{proof}[Proof of \Cref{thm:two_agents}]
    
        If $v_1(M)=0$, the algorithm returns the allocation $\calA$ such that $A_1 = M$ and $A_2 = \emptyset$ which is clearly \EQ{1}. So assume $v_1(M)>0$. Let $f(t)=v_1(S_t)-v_2(T_t)$ for all $t \in \{0, 1, \ldots m\}$. Then, we have $f(0)=-v_2(M)\le 0$ and $f(m)=v_1(M)>0$. Our algorithm finds the smallest $i \in [m]$ such that $f(i) > 0$. By the minimality of $i$, and the fact that $f(0) \le 0$, we have $f(i - 1) \le 0$. Thus, we have the following two inequalities:
    \begin{enumerate}
        \item $v_1(S_i) > v_2(T_i)$
        \item $v_1(S_{i-1}) \le v_2(T_{i-1})$
    \end{enumerate}

    The algorithm sets $\calA=(S_i,T_i)$ and $\calB=(S_{i-1},T_{i-1})$, returning $\calA$ if it is \EQ{1}, otherwise $\calB$.
    
    If $\calA$ is \EQ{1} we are done. Otherwise removing $e_i$ from $S_i$ does not eliminate the inequality. Thus
    $v_1(S_{i-1}) = v_1(S_i\setminus\{e_i\}) > v_2(T_i)$. Now, in the allocation $\calB$, $v_1(S_{i-1}) \le v_2(T_{i-1})$ by (2) and $v_2(T_{i-1} \setminus e_i) = v_2(T_i) < v_1(S_{i-1})$. Hence $\calB$ is \EQ{1}.
    
    The runtime is clearly polynomial since checking the inequalities can be done in constant time.
    
    \end{proof}

\subsection{Beyond Two Agents: Non-existence and Hardness Result}\label{sec:hardness}

For more than two agents, we show that even with the assumption of the grand bundle being nonnegatively valued by all the agents, an \EQ{1} allocation may not exist, and the corresponding decision problem is NP-complete. In fact, our result holds even for the restricted class of supermodular valuations.

\begin{restatable}{theorem}{hardness}
\label{thm:hard_supermodular}
For valuations where each agent values the grand bundle nonnegatively, \EQ{1} allocations may not exist. Furthermore, it is NP-Complete to decide whether an \EQ{1} allocation exists, even when the valuations are supermodular.
\end{restatable}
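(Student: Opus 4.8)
The plan is to prove both assertions by a single reduction from a strongly NP-complete partition-type problem. For the non-existence of \EQ{1}, I would construct an explicit instance with three (or a small constant number of) agents and supermodular valuations for which no complete allocation is \EQ{1}; for NP-completeness, I would make the construction parametric so that an \EQ{1} allocation exists if and only if a ``yes'' instance of the source problem holds. Membership in NP is immediate: given an allocation $\mathcal{A}$, one can verify in polynomial time, using the value oracles, whether it is \EQ{1} (for each ordered pair $(i,j)$ with $v_i(A_i) < v_j(A_j)$, check the two single-item-removal conditions over all items in $A_i$ and $A_j$). So the work is entirely in the hardness direction.

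For the reduction I would start from \Partition (or better \RestrictedPartition, which the paper has a macro for, to control the item multiplicities and keep values polynomially bounded), with input multiset $\{a_1,\dots,a_m\}$ summing to $2B$. The idea is to have three agents: agents $1$ and $2$ act as the ``two sides'' of the partition, and agent $3$ is a calibrating agent whose bundle is forced to have a fixed value, pinning down the common equitable level. Concretely, I would give agents $1$ and $2$ additive-like valuations on the ``partition items'' $g_1,\dots,g_m$ with $v_1(g_k)=v_2(g_k)=a_k$, together with a few large ``dummy'' items used to (i) make every agent's grand-bundle value nonnegative, and (ii) make the valuations genuinely supermodular rather than merely additive — e.g. by adding a superadditive bonus term that only kicks in when a designated pair of dummy items is held together, which is a standard way to turn an additive gadget supermodular. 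The key point is to choose the magnitudes so that the only way to have $v_1(A_1)$, $v_2(A_2)$, and $v_3(A_3)$ all within ``one item'' of a common value $\theta$ is to split the $g_k$'s into two sets each summing to exactly $B$; any imbalance of even the smallest unit $a_k$ propagates into a gap that no single-item removal can repair, because the dummy items are scaled to be either negligible or too large to serve as the removable ``witness'' item. The non-existence instance is then just any fixed input on which \RestrictedPartition has no solution.

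The main obstacle I expect is controlling the interaction between the three \EQ{1} inequalities simultaneously while keeping the valuations supermodular. Supermodularity is a genuine constraint — increasing marginal values — so the ``bonus'' gadget must be designed carefully (e.g. $v(S) = \sum_{e\in S} w(e) + \lambda \cdot \binom{|S \cap D|}{2}$ for a dummy set $D$, which is supermodular), and I must re-verify supermodularity for each of the three agents after adding all the calibration items. A secondary subtlety is that \EQ{1} allows removal from \emph{either} the rich or the poor agent's bundle, so I must rule out the ``remove a chore from the poor agent'' escape route as well — this is why I would avoid chores entirely and work only with nonnegative-valued items, ensuring the only relevant operation is removing a high-value item from a richer agent, and then argue that after such a removal the agent who lost the item becomes the new poor agent and the inequality with the \emph{third} agent cannot also be fixed. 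Once the gadget is pinned down, the equivalence ``$\exists$ \EQ{1} allocation $\iff$ \RestrictedPartition is a yes-instance'' should follow from a short case analysis on which agent is rich, and the same fixed no-instance gives the unconditional non-existence claim; this also shows the hardness persists under the paper's standing assumption that all agents value the grand bundle nonnegatively.
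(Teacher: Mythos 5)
Your high-level architecture lines up with the paper's: a reduction from \RestrictedPartition{} with three agents, two of which ``act as the two sides of the partition'' and a third calibrating agent, plus a routine NP-membership argument. But one explicit design decision you commit to is fatal and would make the construction collapse: you say you would ``avoid chores entirely and work only with nonnegative-valued items'' in order to sidestep the remove-a-chore-from-the-poor-agent case. For a \emph{supermodular} valuation with (WLOG) $v(\emptyset)=0$, the marginal of any item $e$ at any superset $S$ is at least the marginal at $\emptyset$, i.e.\ $v(S\cup\{e\})-v(S)\ge v(\{e\})$. So if every singleton is nonnegative, the valuation is monotone, hence doubly monotone. By the paper's own Theorem~\ref{thm:doublymonotone}, every such instance with a nonnegative grand bundle admits an \EQ{1} allocation. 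In other words, a supermodular reduction that uses only ``goods'' cannot produce a no-instance; non-monotonicity is not a nuisance to engineer around but the load-bearing feature.

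The paper's construction is simpler and leans directly on this. Agents $1,2$ get identical supermodular valuations $v(S)=2\sum_{j\in S}b_j - T$ for $S\neq\emptyset$ and $v(\emptyset)=0$, while agent $3$ has $v_3(S)=|S|$. The discontinuity at $\emptyset$ is the whole gadget: the first item an agent receives has marginal $2b_e-T<0$ (a chore at that moment), while subsequent items have marginal $2b_e>0$; this is precisely supermodular and precisely non-monotone. Your proposed bonus term $\lambda\binom{|S\cap D|}{2}$ is a legitimate way to create supermodularity, but it is purely an add-on atop an additive base; if that base has no negative singletons you never escape the doubly-monotone regime, and if it does have negative singletons you have reintroduced the ``chores'' you tried to avoid and must argue about them anyway. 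So the proposal as written has a genuine gap; the fix is to embrace a negative marginal somewhere (as the empty-set discontinuity does) rather than engineer it away.
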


Our proof relies on a reduction from the \Partition{} problem. We defer the proof details to \Cref{sec:hard_appendix}.

\section{Marginal Witness Valuations}\label{sec:submodular_dmonotone}

This section establishes the existence and polynomial-time computability of EQ1 allocations for doubly monotone and submodular valuations, under the assumption that every agent values the grand bundle nonnegatively. While our results are stated for these familiar valuation classes, they in fact hold more generally for any valuation class that satisfies the \emph{marginal-witness} property. Informally, a valuation function is said to satisfy the marginal-witness property, if, any non-empty subset having a nonnegative marginal value with respect to a given bundle contains a singleton ``witness'' that also has a nonnegative marginal.

\begin{definition}[Marginal Witness Property] 
A valuation function $v_i: 2^M \rightarrow \mathbb{R}$ is said to satisfy the \emph{marginal witness property} if, for any two disjoint bundles $A, B \subseteq M$ with $B \neq \emptyset$ and $v_i(A \cup B) \geq v_i(A)$, there exists an item $b \in B$ such that $v_i(A \cup \{b\}) \geq v_i(A)$.  
Any valuation function that satisfies the marginal witness property is referred to as a \emph{marginal witness valuation}.
\end{definition}

The main result of this section is stated as the following theorem:

\begin{theorem}\label{thm:marginal_witness}
   Given a fair division instance $\GenInstance$  with marginal-witness valuations such that every agent values the grand bundle nonnegatively, an EQ1 allocation always exists and can be computed in polynomial time.

\end{theorem}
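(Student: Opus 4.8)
\textbf{Proof proposal for \Cref{thm:marginal_witness}.}

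The plan is to maintain an allocation together with a candidate witness threshold $\theta$ and iteratively repair the allocation until it admits a lower\EQ{1} witness, at which point \Cref{prop:witness-implies-eq1} finishes the job. Concretely, I would start from the allocation that gives \emph{all} items to a single agent, say agent $1$, so $A_1 = M$ and $A_j = \emptyset$ for $j \neq 1$, and set $\theta = 0$. Since $v_i(M) \ge 0$ for every agent and $v_i(\emptyset) = 0$ (which holds under all the relevant classes), every agent currently values their bundle at least $0 = \theta$ — condition (1) of \Cref{def:lower-eq1-witness} holds. The only obstruction to being a lower witness is condition (2): there may be some agent $i$ with $v_i(A_i) > \theta$ for which \emph{no} single item can be removed to drop the value to $\le \theta$. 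I would call such an agent \emph{blocking}. The goal of the main loop is to eliminate blocking agents while keeping condition (1) invariant.

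The key step is the repair move. Suppose agent $i$ is blocking, i.e. $v_i(A_i) > \theta$ and $v_i(A_i \setminus \{g\}) > \theta$ for every $g \in A_i$. I want to transfer a carefully chosen sub-bundle of $A_i$ to a poor agent (one with $v_j(A_j) = \theta$, e.g. an agent holding $\emptyset$, or more generally a minimizer). Here the marginal-witness property does the work: because $v_i(A_i) > \theta \ge $ (value of any proper subset reachable by the poor agent), I can identify within $A_i$ a minimal sub-bundle $B$ whose removal brings $i$'s value down to $\le \theta$, and then argue that because removing all of $B$ drops below $\theta$ but removing $B$ minus one item does not, the "last" item added back has a \emph{negative} marginal — dually, from the poor agent's side, I use the marginal-witness property (applied to the complement direction, or to the nonnegativity of $v_j(\text{grand bundle})$ restricted appropriately) to ensure the poor agent, upon receiving $B$, either still sits at $\le \theta$ or now has a single removable item. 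The cleanest route is probably: pick the poor agent $p$ with $v_p(A_p) = \theta$; move items from $A_i$ to $A_p$ one at a time, in the order guaranteed by the marginal-witness property so that each moved item has nonnegative marginal for $p$ at the time of transfer, stopping the instant $v_p$ reaches or exceeds $\theta$ — the last transferred item is then a valid witness item for $p$, and meanwhile $i$'s value has only decreased. I then either re-examine $i$ (it may still be blocking, but with a strictly smaller bundle) or raise $\theta$.

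I expect two points to be the real obstacles. First, \emph{termination}: each repair move must make measurable progress — a natural potential is the number of items held by blocking agents, or lexicographically (number of blocking agents, total size of their bundles), and I must check the repair strictly decreases it without creating new blocking agents, which requires that moving a nonnegative-marginal item to $p$ cannot push some \emph{third} agent into a blocking state (it cannot, since only $A_i$ and $A_p$ change, $A_i$ shrinks and stays $\ge \theta$ by choice of $B$, and $A_p$ is handled to have a witness item). Second, and more delicate, is correctly invoking the marginal-witness property \emph{in both directions} — once to extract the witness item for the receiving poor agent, and once to guarantee that the blocking agent's bundle, after losing $B$, is either exactly at $\theta$ or still has a removable witness item; here I may need to also allow \emph{raising} $\theta$ to the new common value when all poor agents have been lifted, and re-run the argument, so I should phrase the loop as "while some agent is blocking, either do a transfer or increase $\theta$ to $\min_j v_j(A_j)$," and bound the number of $\theta$-increases by $O(m)$ since each corresponds to a distinct achievable bundle value along a monotone process. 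Polynomial runtime then follows because each of the $O(mn)$ transfers uses $O(m)$ oracle calls to locate the witness items.
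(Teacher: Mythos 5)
Your proposal shares the spirit of the paper's argument—maintain a candidate lower \EQ{1} witness and grow poor agents' bundles using the marginal-witness property to find single items with nonnegative marginal—but the starting point and the transfer step contain a genuine gap that the paper's algorithm is specifically designed to avoid.

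The critical issue is that the marginal-witness property is \emph{one-directional}: it lets you extract a single item with nonnegative marginal from a set $B$ \emph{only when you already know} $v_j(A_j \cup B) \ge v_j(A_j)$. Starting with $A_1 = M$ and $A_j = \emptyset$ otherwise, the first transfer works: for a poor agent $p$, $v_p(\emptyset \cup M) = v_p(M) \ge 0$, so marginal-witness hands you some $g$ with $v_p(\{g\}) \ge 0$. But the moment you move $g$ to $p$, the source pool becomes $M \setminus \{g\}$, and for the \emph{next} poor agent $q$ (or even for $p$ again later), you have no guarantee that $v_q(A_q \cup (M \setminus \{g\})) \ge v_q(A_q)$. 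Nonnegativity is assumed only for the grand bundle $M$, not for $M \setminus \{g\}$ or any other subset, and marginal-witness valuations can assign arbitrarily negative values to sub-bundles. Without that hypothesis, you cannot invoke the property at all, and the entire transfer loop stalls. You flag invoking marginal-witness ``in both directions'' as the delicate point, and you are right to worry: the property has no removal direction, and your procedure has no mechanism to keep the ``remaining pool plus current bundle'' quantity nonnegative as the pool shrinks.

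The paper's Algorithm~\ref{alg:marginal_witness} resolves exactly this by going in the opposite direction: it starts with \emph{all} agents empty and a pool $R = M$, repeatedly handing one item to the poorest agent, and—crucially—maintains as an explicit loop invariant (Invariant~2 in \Cref{lemma:inv_marg}) that $v_j(A_j^t \cup R^t) \ge \mu^{t-1}$ for every agent $j$, which is precisely the inequality needed to legitimately invoke the marginal-witness property at every step. To preserve this invariant, the algorithm performs two early-termination checks (lines~\ref{l:mid1} and \ref{l:mid3}): if giving the whole pool, or all but one item of the pool, to some agent would bring that agent down to $\le \mu$, it dumps $R$ on that agent and stops. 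Absent such a check, the invariant would indeed break—exactly the failure mode of your transfer scheme. So the missing ingredient in your proposal is not a termination bookkeeping detail but a structural invariant (pool-plus-bundle nonnegativity) together with the stopping rule that protects it; these are the load-bearing ideas of the published proof.
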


We first present an overview of our algorithm.
\paragraph{Algorithm Overview.} Our algorithm is presented in pseudo-code form as Algorithm~\ref{alg:marginal_witness}. It maintains a partial \EQ{1} allocation $A$ and attempts to allocate either all the remaining items or a single item to some agent. If the set of remaining items $R$ can be allocated to some agent while preserving the \EQ{1} property, the algorithm performs the allocation and terminates. For this purpose, the algorithm performs a couple of early termination checks (lines~\ref{l:mid1} and~\ref{l:mid3}) respectively. If both the checks fail, leveraging the marginal witness property, it identifies an item with a nonnegative marginal value for the poorest agent and assigns it to that agent, ensuring that the \EQ{1} property is maintained. This process continues until all items are allocated. 

\begin{algorithm}[ht]
\SetAlgoLined
\DontPrintSemicolon
\KwIn{An instance $\GenInstance$ with marginal witness valuations, with $v_i(M) \geq 0~\forall~i \in N$.

} 
\KwOut{An \EQ{1} allocation.}
$A_i \gets \phi$ for all $i \in N$.\\
$R \gets M$\tcp*{Set of remaining items}
\While{$R \neq \phi$}{ \label{l:while_loop_begin}
    Let $p \in \arg \min_{i \in N} v_i(A_i)$ \label{l:begin}\tcp*{A poor agent}
    $\mu \gets v_p(A_p)$\;
    \If{$\exists i \in N$, such that $v_i(A_i \cup R) \leq \mu$}{\label{l:mid1}
        $A_i \gets A_i \cup R$ \label{l:eq1_found_1}\;
        \Return{A}
    } \label{l:mid2}
    \If{$\exists i \in N, h \in R$ such that $v_i(A_i \cup R \setminus \{h\}) \leq \mu$}{\label{l:mid3}
        $A_i \gets A_i \cup R$ \label{l:eq1_found_2}\;
        \Return{A}
    }\label{l:end}
    
    Let $g \in R$ be such that $v_p(A_p \cup g) \geq \mu$ \label{l:give_good}\\
    $A_p \gets A_p \cup g$ \label{l:while_loop_end}\;
    $R \gets R \setminus g$
    
}
\Return{$\A=(A_1, A_2, \ldots A_n)$} \label{l:finish}
\caption{Marginal-witness valuations.}
\label{alg:marginal_witness}
\end{algorithm}

Before we begin the analysis, let us first argue that the algorithm is well defined, that is, when the termination tests fail, there indeed exists an item $g \in R$, such that $v_p(A_p \cup \{g\}) \geq \mu$. This is true because, the first early termination test (line ~\ref{l:mid1}) fails, and hence $v_p(A_p \cup R) > \mu = v_p(A_p)$. Therefore, the existence of an item with a nonnegative marginal follows from the marginal witness property.

\subsection{Analysis of Algorithm~\ref{alg:marginal_witness}}

Let $A^t$ and $R^t$ denote the partial allocation and the set of remaining items, respectively, at the beginning of the $t^\text{th}$ iteration of the while loop (line~\ref{l:while_loop_begin}). Also, let $p^t, \mu^t$ and $g^t$ denote the variables $p$ (poor agent), $\mu$ (value derived by the poor agent), and $g$ (item corresponding to the marginal witness property) respectively in the $t^\text{th}$ iteration of the while loop. For the sake of completeness, we define $A^0_j = \phi$ for all $j \in N$, $R^0 = M$, $\mu^0 = 0$. The correctness of our algorithm hinges on the following two invariants:

\begin{enumerate}
    \item {\bf Invariant 1: } For every iteration $t$ of the while loop, $\mu^{t-1}$ is a lower \EQ{1} witness for the partial allocation $A^t$. That is, for each $j \in N$:
    \begin{enumerate}
        \item $v_j(A^t_j) \geq \mu^{t-1}$, and
        \item $\displaystyle \min_{S \subseteq A^t_j: |S| \leq 1} v_j(A^t_j \setminus S) \leq \mu^{t-1}$.
    \end{enumerate}
    \item {\bf Invariant 2: } For every iteration $t$, we have $v_j(A_j^t \cup R^t) \geq \mu^{t-1}$ for all $j \in N$.
\end{enumerate}

\begin{lemma}\label{lemma:inv_marg}
    Algorithm~\ref{alg:marginal_witness} maintains {\bf Invariant 1} and {\bf Invariant 2}.
    
\end{lemma}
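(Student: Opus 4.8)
\textbf{Proof plan for \Cref{lemma:inv_marg}.}

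The plan is to prove both invariants simultaneously by induction on the iteration number $t$, since the two invariants feed into each other. For the base case $t=1$, the partial allocation is empty, so $v_j(A^1_j) = v_j(\emptyset)$ and $\mu^0 = 0$; I will need the (implicit, but standard) normalization $v_j(\emptyset) = 0$ to get part (a) of Invariant 1 with equality, which also makes part (b) trivially true since each bundle is empty. Invariant 2 for $t=1$ asserts $v_j(\emptyset \cup M) = v_j(M) \ge 0 = \mu^0$, which is exactly the hypothesis that every agent values the grand bundle nonnegatively.

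For the inductive step, assume both invariants hold at iteration $t$; I must show they hold at iteration $t+1$, given that the algorithm did not terminate in iteration $t$ (otherwise there is nothing to prove). The transition from $A^t$ to $A^{t+1}$ consists of adding the witness item $g^t$ to the poor agent $p^t$'s bundle, where $g^t$ satisfies $v_{p^t}(A^t_{p^t} \cup \{g^t\}) \ge \mu^t$ by the marginal-witness property (as argued right before the subsection). For Invariant 1 at $t+1$ with witness $\mu^t$: for agents $j \neq p^t$, their bundles are unchanged, so $v_j(A^{t+1}_j) = v_j(A^t_j) \ge \mu^t$ because $\mu^t = v_{p^t}(A^t_{p^t}) = \min_i v_i(A^t_i)$ by the choice of $p^t$ in line~\ref{l:begin}; and part (b) follows since removing nothing already puts them at $\ge \mu^t$... wait — part (b) requires the bundle value to be able to \emph{drop to} $\le \mu^t$, so for $j \neq p^t$ I actually need to carry over part (b) from the inductive hypothesis at level $\mu^{t-1} \le \mu^t$, noting $\mu^{t-1} \le \mu^t$ because $\mu^{t-1}$ was a lower witness so $v_{p^t}(A^t_{p^t}) \ge \mu^{t-1}$. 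For $j = p^t$: part (a) holds because $v_{p^t}(A^{t+1}_{p^t}) = v_{p^t}(A^t_{p^t} \cup \{g^t\}) \ge \mu^t$ by the witness item property; part (b) holds by taking $S = \{g^t\}$, giving $v_{p^t}(A^{t+1}_{p^t} \setminus \{g^t\}) = v_{p^t}(A^t_{p^t}) = \mu^t \le \mu^t$.

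For Invariant 2 at $t+1$, I need $v_j(A^{t+1}_j \cup R^{t+1}) \ge \mu^t$ for all $j$. Note $A^{t+1}_j \cup R^{t+1} = A^t_j \cup R^t$ for \emph{every} agent $j$ — for $j \neq p^t$ the bundle is unchanged and $R$ lost $g^t$ but $g^t \notin A^t_j$; for $j = p^t$ the item $g^t$ simply moved from $R^t$ into the bundle. So the quantity $v_j(A^{t+1}_j \cup R^{t+1}) = v_j(A^t_j \cup R^t)$, and I must show this is $\ge \mu^t$, which is \emph{stronger} than the inductive hypothesis Invariant 2 at level $\mu^{t-1}$. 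This is the main obstacle, and it is exactly what the two early-termination tests rule out: the test on line~\ref{l:mid1} failing means there is no $i$ with $v_i(A^t_i \cup R^t) \le \mu^t$, i.e. $v_i(A^t_i \cup R^t) > \mu^t$ for all $i$ — which gives Invariant 2 at $t+1$ directly (in fact with strict inequality). So the deduction is: line~\ref{l:mid1}'s test fails $\Rightarrow$ Invariant 2 holds at $t+1$. I should double-check that the second termination test (line~\ref{l:mid3}) and the well-definedness of $g^t$ are not needed for the invariants themselves but only for the algorithm to make progress; the invariants follow purely from the failure of the first test plus the marginal-witness choice of $g^t$. Assembling these pieces completes the induction.
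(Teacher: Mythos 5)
The induction structure, the base case, and the entire Invariant~1 argument in your plan match the paper's proof. But the Invariant~2 step contains a set-theoretic error that undermines your conclusion. You claim that $A^{t+1}_j \cup R^{t+1} = A^t_j \cup R^t$ for \emph{every} agent $j$, arguing for $j \neq p^t$ that ``$R$ lost $g^t$ but $g^t \notin A^t_j$.'' That last clause is true, but it establishes the opposite of what you want: precisely because $g^t \notin A^t_j$, the item $g^t$ vanishes entirely from the combined set, so $A^{t+1}_j \cup R^{t+1} = A^t_j \cup (R^t \setminus \{g^t\})$, which is a \emph{strict subset} of $A^t_j \cup R^t$. (The equality you wrote holds only for $j = p^t$, where $g^t$ merely migrates from $R$ into the bundle.) Since valuations are non-monotone, $v_j$ of this strictly smaller set could drop below $\mu^t$ in general, so the failure of the first early-termination test (line~\ref{l:mid1}), which controls only $v_j(A^t_j \cup R^t)$, does \emph{not} give you Invariant~2 for $j \neq p^t$.

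This is exactly where the second early-termination test (line~\ref{l:mid3}) comes in—the very test you explicitly set aside as ``not needed for the invariants themselves.'' Its failure says that for every $i \in N$ and every $h \in R^t$ we have $v_i(A^t_i \cup R^t \setminus \{h\}) > \mu^t$; instantiating $h = g^t$ yields $v_j(A^{t+1}_j \cup R^{t+1}) = v_j(A^t_j \cup R^t \setminus \{g^t\}) > \mu^t$ for all $j \neq p^t$, which is the missing half of Invariant~2. Your first test handles $j = p^t$ (where the combined set is unchanged), and the second test handles everyone else. Without the second test you have no control over how removing $g^t$ affects the other agents' residual values; the paper's proof uses both tests precisely for this split.
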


\begin{proof}
    We proceed via induction on $t$.

    \textbf{Base case.} ($t = 1$): Invariant $1$ is satisfied because $\mu^0 = 0$ and $A^1_j = \emptyset$ for all agents $j \in N$. Invariant $2$ is satisfied because the grand bundle is valued nonnegatively, that is, $v_j(M) \geq 0$ for all $j \in N$. Hence, $v_i(A_i^0 \cup R) = v_i(M) \geq 0 = \mu^0$.

    \textbf{Inductive step.} Consider the $(t+1)^\text{th}$ iteration of the while loop. Since this iteration is reached, the two early termination checks (line~\ref{l:mid1} and line~\ref{l:mid3}) must have failed in the previous ($t^{\text{th}}$) iteration. Note that, $A^{t+1}_j = A^t_j$ for all $j \neq p^t$, and $A^{t+1}_{p^t} = A^t_{p^t} \cup \{g^t\}$. Also, $v_{p^t}(A^{t+1}_p) = v_{p^t}(A^t_{p^t} \cup \{g^t\}) \geq \mu^t = v_{p^t}(A^t_{p^t})$. Hence, agent utilities are non-decreasing, and so is the sequence $(\mu^0, \mu^1, \mu^2, \ldots )$.

    \paragraph{Invariant 1.} Clearly, $v_j(A^{t+1}_j) \geq \mu^{t+1} \geq \mu^t$ for all $j \in N$. Hence, the first requirement of invariant $1$ holds. For each agent $j \neq p^t$, the second requirement of invariant $1$ holds because $A^{t+1}_j = A^t_j$ and $\mu^t \geq \mu^{t-1}$. For agent $p^t$, clearly, $v_{p^t}(A^{t+1}_{p^t} \setminus \{g^t\}) = \mu^t$. Hence the invariant is satisfied in the $(t+1)^\text{th}$ iteration. 

    \paragraph{Invariant 2.} Since the second early termination check (line~\ref{l:mid3}) failed in the $t^{\text{th}}$ iteration, we have $v_j(A^t_j \cup R^t \setminus \{g^t\}) > \mu^t$ for all $j \in N$. Since $R^{t+1} = R^t \setminus \{g^t\}$, this means that $v_j(A^{t+1}_j \cup R^{t+1}) = v_j(A^{t}_j \cup R^{t} \setminus \{g^t\}) \geq \mu^t$ for all $j \neq p^t$. Also, $v_{p^t}(A^{t+1}_{p^t} \cup R^{t+1}) = v_{p^t}((A^t_{p^t} \cup \{g^t\}) \cup (R^t \setminus \{g^t\})) = v_{p^t}(A^t_{p^t} \cup R^t) > \mu^t$, where the last inequality due to the failure of the first early termination check (line~\ref{l:mid1}). Hence, the invariant is satisfied in the $(t+1)^\text{th}$ iteration.
\end{proof}

We will now prove \Cref{thm:marginal_witness}.

\begin{proof}[Proof of \Cref{thm:marginal_witness}]
    Clearly, Algorithm~\ref{alg:marginal_witness} terminates in polynomial time, since each iteration takes polynomial time, and the number of iterations is bounded by  $|M|$. There are three possibilities for the termination state:

    \begin{enumerate}
        \item The first early termination test (line~\ref{l:mid1}) passes in some iteration (say the $t^{\text{th}}$ iteration), that is $\exists i \in N$, such that $v_i(A^t_i \cup R^t) \leq \mu^t$. In this case, we allocate $R^t$ to $i$ and return an allocation $\mathcal{B} = (B_1, \ldots B_n)$ where $B_i = A_i^t \cup R^t$ and $B_j = A_j^t$ for all $j \neq i$. We claim that $\theta = v_i(A^t_i \cup R^t)$ is a lower \EQ{1} witness for the allocation $\mathcal{B}$. Indeed, note that every agent's value under $\mathcal{B}$ is atleast $\theta$ (since $v_i(B_i) = v_i(A_i^t \cup R^t) = \theta$ and for all $j \neq i$, we have $v_j(A^t_j) \geq \mu^t \geq \theta$.) Also, from invariant $2$, we have $\mu^{t-1} \leq \theta$. Now, from invariant $1$, it follows that if $v_j(A_j^t) > \theta \geq \mu^{t-1}$, then there exists $h \in A_j^t$, such that $v_j(A_j^t \setminus \{h\}) \leq \mu^{t-1} \leq \theta$. Therefore, $\mathcal{B}$ is an \EQ{1} allocation.
        
        \item The first early termination test fails but the second early termination test (line~\ref{l:mid3}) passes in some iteration (say the $t^{\text{th}}$ iteration). Hence, there exists $i \in N$, and an item $h \in R^t$, such that $v_i(A^t_i \cup R^t \setminus \{h\}) \leq \mu^t$. In this case, we allocate $R^t$ to agent $i$ and return an allocation $\mathcal{B} = (B_1, \ldots B_n)$ where $B_i = A_i^t \cup R^t$ and $B_j = A_j^t$ for all $j \neq i$. Note that $v_i(B_i) = v_i(A_i^t \cup R^t) \geq \mu^t$, since the first early termination test (line~\ref{l:mid1}) fails. Also, for all agents $j \neq i$, $v_j(B_j) = v_j(A_j^t) \geq \mu^t$ by the definition of $\mu^t$. Hence, $\mu^t$ is a lower bound on the agent utilities. Additionally, from invariant $1$, for each $j \neq i$, $\displaystyle \min_{S \subseteq B_j: |S| \leq 1} v_j(B_j \setminus S) = \min_{S \subseteq A^t_j: |S| \leq 1} v_j(A^t_j \setminus S) \leq \mu^{t-1} \leq \mu^t$. For the agent $i$, clearly $v_i((A^t_i \cup R^t) \setminus \{h\}) \leq \mu^t$. Thus, $\mu^t$ is a lower \EQ{1} witness for $\mathcal{B}$.
        \item The algorithm terminates in line~\ref{l:finish}. Here, the algorithm runs for exactly $|M|$ iterations, allocating a single item in each iteration. It follows from invariant $1$ that $\mu^{|M|}$ is a lower \EQ{1} witness for the final allocation.
    \end{enumerate}
    This completes the proof.
    \end{proof}

As an immediate corollary, Theorems~\ref{thm:submodular} and~\ref{thm:doublymonotone} follow, since submodular and doubly monotone valuations satisfy the marginal witness property; proofs are deferred to Appendix~\ref{apndx:submod_dmon}.

\begin{restatable}{theorem}{submodular}
\label{thm:submodular}
   Given a fair division instance $\GenInstance$ where each agent has a submodular valuation and values the grand bundle nonnegatively, an \EQ{1} allocation always exists and can be computed in polynomial time.
\end{restatable}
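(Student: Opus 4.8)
The plan is to derive Theorem~\ref{thm:submodular} as a direct corollary of Theorem~\ref{thm:marginal_witness}, so the only thing that needs proving is that every submodular valuation satisfies the marginal witness property. Once that containment is established, the existence and polynomial-time computability claims follow verbatim from Theorem~\ref{thm:marginal_witness} (the algorithm only ever queries the oracle, so running time is unaffected).

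So the core task is the following: fix an agent with a submodular valuation $v$, take disjoint bundles $A, B \subseteq M$ with $B \neq \emptyset$ and $v(A \cup B) \geq v(A)$; I must exhibit an item $b \in B$ with $v(A \cup \{b\}) \geq v(A)$. I would argue by contradiction: suppose $v(A \cup \{b\}) < v(A)$ for every $b \in B$, i.e. every item of $B$ has strictly negative marginal value with respect to $A$. Enumerate $B = \{b_1, \dots, b_k\}$ and write the telescoping sum
\[
v(A \cup B) - v(A) = \sum_{j=1}^{k} \bigl( v(A \cup \{b_1, \dots, b_j\}) - v(A \cup \{b_1, \dots, b_{j-1}\}) \bigr).
\]
Submodularity (diminishing marginals) gives, for each $j$, that the $j$-th term is at most $v(A \cup \{b_j\}) - v(A)$, since $A \subseteq A \cup \{b_1, \dots, b_{j-1}\}$ and $b_j$ lies outside the larger set. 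By the contradiction hypothesis each $v(A \cup \{b_j\}) - v(A) < 0$, so the whole sum is strictly negative, contradicting $v(A \cup B) \geq v(A)$. Hence some $b \in B$ has nonnegative marginal, which is exactly the marginal witness property.

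The argument is essentially routine; the one point to be slightly careful about is that the submodularity inequality as stated in the preliminaries is $f(S \cup e) - f(S) \ge f(T \cup e) - f(T)$ for $S \subseteq T$ and $e \notin T$, so I need $b_j \notin A \cup \{b_1,\dots,b_{j-1}\}$, which holds because $A$ and $B$ are disjoint and the $b_\ell$ are distinct. There is no real obstacle here — the "hard part," such as it is, was already absorbed into proving Theorem~\ref{thm:marginal_witness}; this corollary is a one-paragraph telescoping estimate. (The analogous statement Theorem~\ref{thm:doublymonotone} for doubly monotone valuations is even easier: if $v(A\cup B)\ge v(A)$ then $B$ cannot consist entirely of chores of this agent, since adding chores only decreases value, so $B$ contains a good $b$, and adding that single good to $A$ does not decrease value.)
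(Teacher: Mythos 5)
Your proof is correct and follows the same route as the paper: reduce to Theorem~\ref{thm:marginal_witness} and verify the marginal witness property via the telescoping sum together with the diminishing-marginals inequality. The paper phrases it as a direct upper bound ($0 \le k \cdot \max_j (v(A\cup b_j)-v(A))$, hence the max is nonnegative) while you phrase it as a contradiction, but these are the same argument.
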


\begin{restatable}{theorem}{doublymonotone}
\label{thm:doublymonotone}
   Given a fair division instance $\GenInstance$  where each agent has a doubly monotone valuation and values the grand bundle nonnegatively, an \EQ{1} allocation always exists and can be computed in polynomial time.
\end{restatable}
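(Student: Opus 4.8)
\textbf{Proof proposal for \Cref{thm:doublymonotone}.}

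The plan is to show that any doubly monotone valuation satisfies the marginal witness property, and then invoke \Cref{thm:marginal_witness} directly. Recall that $f$ is doubly monotone means the item set $M$ partitions as $M = G \sqcup C$ (goods and chores, possibly different for each agent) so that adding any good to any bundle is nonnegative in marginal value and adding any chore is nonpositive. We want: for disjoint $A, B$ with $B \neq \emptyset$ and $f(A \cup B) \geq f(A)$, there is some $b \in B$ with $f(A \cup \{b\}) \geq f(A)$.

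The key step is an interpolation argument. Since $f(A \cup B) \ge f(A)$, I would build a path from $A$ to $A \cup B$ that first removes all the chores of $B$ one at a time and then adds all the goods of $B$ one at a time — but that's the wrong direction. Instead: order the elements of $B$ so that the goods of $B$ come first, say $B = \{b_1, \dots, b_k\}$ with $b_1, \dots, b_\ell \in G$ and $b_{\ell+1}, \dots, b_k \in C$. Consider the chain $A = A_0 \subseteq A_1 \subseteq \cdots \subseteq A_k = A \cup B$ where $A_j = A \cup \{b_1, \dots, b_j\}$. Going from $A_0$ to $A_\ell$ we only add goods, so $f$ is non-decreasing along that prefix; going from $A_\ell$ to $A_k$ we only add chores, so $f$ is non-increasing along that suffix. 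Hence $f(A_\ell) = \max_j f(A_j) \ge f(A_k) = f(A\cup B) \ge f(A) = f(A_0)$. If $\ell \ge 1$, then in particular $f(A_1) = f(A \cup \{b_1\}) \ge f(A_0) = f(A)$ (since the prefix is non-decreasing), so $b_1$ is the desired witness. If $\ell = 0$, i.e.\ $B$ contains no goods of this agent, then the chain is entirely non-increasing, forcing $f(A\cup B) \le f(A)$; combined with the hypothesis $f(A \cup B) \ge f(A)$ this gives $f(A \cup B) = f(A)$ and indeed every intermediate value is sandwiched to equal $f(A)$, so again $f(A \cup \{b_1\}) = f(A) \ge f(A)$ and $b_1$ works.

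So in all cases the marginal witness property holds, and \Cref{thm:marginal_witness} yields existence and polynomial-time computability of an \EQ{1} allocation for doubly monotone valuations whenever each agent values the grand bundle nonnegatively. (The oracle access assumed throughout lets the algorithm of \Cref{thm:marginal_witness} locate the witness item efficiently — concretely by scanning $R$ for an item of nonnegative marginal to the poor agent, which the above argument guarantees exists once the first early-termination check fails.) I do not expect a genuine obstacle here; the only mild subtlety is handling the $\ell = 0$ case cleanly, which the sandwiching argument above resolves. It is worth noting that submodularity (\Cref{thm:submodular}) needs a slightly different, though equally short, argument: if $f(A\cup B) \ge f(A)$ but every singleton marginal $f(A \cup \{b\}) - f(A) < 0$, then by submodularity each marginal of $b$ with respect to any superset of $A$ is also negative, and telescoping the chain $A_0 \subseteq \cdots \subseteq A_k$ gives $f(A \cup B) - f(A) < 0$, a contradiction.
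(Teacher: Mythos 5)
Your proposal is correct and follows essentially the same route as the paper: reduce to verifying the marginal witness property for doubly monotone valuations and then invoke \Cref{thm:marginal_witness}. The paper proves the property by contradiction (if every $b_i\in B$ had strictly negative marginal over $A$, each would have to be a chore, and telescoping would force $v(A\cup B)<v(A)$), while you argue constructively by ordering $B$ goods-first and tracking the non-decreasing-then-non-increasing chain from $A$ to $A\cup B$; these are two phrasings of the same underlying goods/chores observation, and both are correct.
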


\section{Non-negative Valuations}\label{sec:nonneg}

In this section, we show the existence of \EQ{1} allocations for nonnegative valuations. Recall that a valuation function is \emph{nonnegative} if every subset of items has a nonnegative value. Such valuations need not be monotone: the marginal value of adding an item to a bundle can be negative. That is, for some agent \(i\), item \(e\), and bundle \(S\), it may hold that \(v_i(S \cup \{e\}) < v_i(S)\). Nevertheless, the total value of any bundle is always nonnegative, i.e., \(v_i(S) \ge 0\) for all \(S \subseteq M\). Our result settles the open question from \cite{bilò2025}.

\begin{restatable}{theorem}{nonneg}
\label{thm:nonneg}
Given a fair division instance $\GenInstance$ with nonnegative valuations, an \EQ{1} allocation always exists. Furthermore, if $|M|\ge |N|$, then there exists an \EQ{1} allocation where each agent receives a non-empty bundle. 
\end{restatable}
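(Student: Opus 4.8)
The plan is to adapt the knife/moving-cut idea, now combined with the notion of a lower \EQ{1} witness and a careful handling of zero-value bundles. The key point for nonnegative valuations is that every agent's value for \emph{any} bundle is at least $0$, so the threshold $\theta = 0$ is automatically a lower bound, and the only obstruction to $\theta=0$ being a \emph{lower \EQ{1} witness} is an agent whose bundle has value strictly exceeding $0$ but cannot be brought down to $0$ by deleting a single item. So the strategy is: try to build an allocation in which every agent either gets a bundle of value $0$, or gets a bundle that drops to value $\le 0$ (hence $=0$ by nonnegativity, or at least $\le \theta$ for the relevant threshold) after removing one item. I expect to proceed by a combinatorial/inductive argument rather than a fixed-point argument (the excerpt stresses ``combinatorial approach'').

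\textbf{Main line of argument.} First I would dispose of degenerate cases: if some agent values the grand bundle at $0$, or more generally if we can give all remaining items to a single agent keeping value $0$, that agent is ``satisfied at threshold $0$'' and we recurse on the rest. In the general step, pick the current poorest agent $p$; since $v_p(\emptyset)=0$ but $v_p(M)$ may be positive, we want to hand $p$ items one at a time, but unlike the marginal-witness case there need not be a single item with nonnegative marginal. Here is where nonnegativity is used differently: instead of tracking marginals, track bundle \emph{values}, which are all $\ge 0$. I would argue that we can always find a bundle $S$ of remaining items with $v_p(S) > 0$ such that $v_p(S\setminus\{g\}) = 0$ for some $g\in S$ — e.g. take an inclusion-minimal sub-bundle $S$ of the remaining items with $v_p(S)>0$; then removing any $g\in S$ drops the value out of the ``$>0$'' regime, i.e. $v_p(S\setminus\{g\})=0$ by minimality and nonnegativity. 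Assigning such an $S$ to $p$ makes $p$ satisfy the witness condition at $\theta=0$. The subtlety is that \emph{other} agents' bundles, already fixed, may have value $>0$ and not be reducible to $0$ by one item; so the actual threshold cannot just be $0$ globally — instead one should run the moving-knife/greedy process so that at all times the partial allocation admits the running poorest value $\mu$ as a lower \EQ{1} witness, exactly as in Invariants 1 and 2 of \Cref{alg:marginal_witness}, and use nonnegativity ($v_j(A_j\cup R)\ge 0 \ge$ current deficits) to keep Invariant 2 alive. When the greedy process gets stuck (no inclusion-minimal positive sub-bundle helps, or an early-termination-style test fires), one shows directly that the current allocation, possibly after dumping $R$ onto one agent, is already \EQ{1}, again via a witness value argument.

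\textbf{The non-emptiness refinement.} For the ``$|M| \ge |N|$, every agent non-empty'' strengthening, the natural fix is to first reserve, for each agent, one carefully chosen item, or to modify the greedy so that an agent is never left empty when items remain: if at termination some agent $j$ has $A_j=\emptyset$ but $|M|\ge|N|$, then some other agent $k$ holds $\ge 2$ items, and by nonnegativity moving one item from $A_k$ to $j$ keeps $v_j$ and $v_k$ in the nonnegative range; one checks this local transfer does not destroy the \EQ{1} witness (the threshold $\theta=0$ behaves well: $j$'s value goes from $0$ up to $\ge 0$ and can be pulled back to $0$ by removing that very item, while $k$'s value only decreases and remains $\ge 0$). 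Iterating such transfers yields a non-empty \EQ{1} allocation.

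\textbf{Expected main obstacle.} The delicate part is \emph{not} the existence of an inclusion-minimal positive sub-bundle (that is immediate from nonnegativity), but ensuring that assigning such a bundle to the poorest agent, repeatedly, interacts correctly with the \EQ{1} witness condition for agents whose bundles were fixed \emph{earlier} and may strictly exceed the current running threshold without being one-item-reducible to it. In other words, the hard part is maintaining the analogue of Invariant 1 when we hand out a whole minimal bundle (not a single item) in one step, and arguing that the early-termination / ``dump $R$ on one agent'' cases are exhaustive. A clean way around this may be to give out the minimal positive bundle \emph{one item at a time while its running value is still $0$}, and only the ``last'' item flips the value to positive — this reduces the step to a single-item update and lets the Invariant-1 argument of \Cref{lemma:inv_marg} go through almost verbatim, with nonnegativity supplying Invariant 2 for free since $v_j(A_j\cup R)\ge 0$ always.
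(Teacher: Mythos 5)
Your proposal identifies the real difficulty (no single-item nonnegative marginal is guaranteed, so the marginal-witness algorithm breaks), and the inclusion-minimal-bundle intuition is a good one, but the way you try to repair the algorithm has several concrete gaps and ends up quite far from the paper's argument.

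The paper's Algorithm~\ref{alg:global_non_neg} does \emph{not} hand anything to the poorest agent. It restricts attention to \emph{valid} bundles $S \subseteq R$ with $|R\setminus S|\ge n-1$ and picks the pair $(\ell,S)$ minimizing $v_\ell(A_\ell \cup S)$ over \emph{all} agents and valid bundles; the key loop invariant (Lemma~\ref{claim:rich_invariant}) is that the recipient becomes the (weakly) \emph{richest} agent, and the final witness $\theta$ is the value of the \emph{last} recipient, not a running poorest level $\mu$. The cardinality constraint on valid bundles guarantees exactly $n-1$ items remain when the loop halts, and these are handed out one per remaining agent; Corollary~\ref{cor:valid_bundle} then gives $v_i(A_i^t\cup\{g\})\ge\theta$ for these agents and $v_i(A_i^t)\le\theta$, so $\theta$ is a lower \EQ{1} witness. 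This is a single pass with no early-termination case analysis, and non-emptiness is built in rather than patched at the end.

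Where your version breaks down: (i) You take an inclusion-minimal $S$ with $v_p(S)>0$, but after $p$ already holds $A_p\ne\emptyset$ the relevant quantity is $v_p(A_p\cup S)$, not $v_p(S)$; the minimality you need is minimality of $v_p(A_p\cup S) > v_p(A_p)$. (ii) Even with that fix, the claim that ``nonnegativity supplies Invariant~2 for free since $v_j(A_j\cup R)\ge 0$'' is wrong: Invariant~2 requires $v_j(A_j\cup R)\ge \mu^{t-1}$, and after a few rounds $\mu^{t-1}$ exceeds $0$, so $v_j(A_j\cup R)\ge 0$ is not enough — in Algorithm~\ref{alg:marginal_witness} Invariant~2 is propagated by the early-termination tests, which remove a single item; once you allocate a multi-item bundle $S$ those tests no longer control $v_j(A_j\cup R\setminus S)$. (iii) Your ``one item at a time while the running value is still $0$'' repair assumes the intermediate values along a chain $A_p\subsetneq A_p\cup\{g_1\}\subsetneq \cdots$ sit exactly at the current $\mu$; minimality only gives $\le\mu$, so when $\mu>0$ a single-item step can push $p$ \emph{below} $\mu$, after which the old $\mu$ is no longer a valid lower witness for the other agents. (iv) The non-emptiness transfer argument claims moving an item out of $A_k$ only decreases $v_k$, which is false for non-monotone nonnegative valuations, and it implicitly reverts to the threshold $\theta=0$, which is not the threshold your main argument is running with. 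None of these are cosmetic; each is an obstacle your sketch does not resolve, and the paper sidesteps all of them by a different greedy (minimize $v_j(A_j\cup S)$ over \emph{all} $j$ and size-restricted $S$, and track the richest rather than the poorest).
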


The proof of Theorem~\ref{thm:nonneg} is constructive and is based on Algorithm~\ref{alg:global_non_neg}, which we describe next.


\paragraph{Algorithm Overview.}
We start with an empty allocation and a pool \(R\) containing all the items. In each step of our algorithm, we would like to select a subset \(S \subseteq R\) and an agent \(i\) that minimize \(v_i(A_i \cup S)\) over all agents and subsets, then give \(S\) to \(i\) and remove \(S\) from \(R\). Throughout this process, two invariants are maintained: (i) no agent is ever made poorer upon receiving a set—if \(v_i(A_i \cup S) < v_i(A_i)\), then at an earlier step when \(A_i\) was formed, the union \(A_i \cup S\) would have been a strictly better minimizer; and (ii) after a set \(S\) is given to an agent \(i\), she becomes one of the rich agents—if some other agent \(j\) were richer than \(i\) after \(i\) received \(S\), it would contradict the formation of \(A_j\). Furthermore, due to the minimality of \(v_i(A_i \cup S)\), every other agent \(j \neq i\) can be made (weakly) richer than \(i\) by adding any non-empty subset, and in particular, any single item from the pool. Consequently, if at some point, by sheer luck, there are exactly \(n-1\) items remaining in the pool \(R\), the process can be stopped, and each of the agents, except the last one to receive a set, can be given one item from the pool to obtain an \(\EQ{1}\) allocation, with a lower \(\EQ{1}\) witness being the value of the last agent who received a set.

Now, to ensure that the pool indeed ends with \(n-1\) items, Algorithm~\ref{alg:global_non_neg} restricts attention to \emph{valid} bundles \(S\) whose removal leaves at least \(n-1\) items in the pool, i.e., \(|R \setminus S| \ge n-1\). Lemma~\ref{claim:rich_invariant} shows that the two invariants continue to hold under this validity constraint. Hence, when the loop terminates, \(|R| = n-1\), and the remaining items can be assigned one per agent (except the last agent to receive some subset) to obtain an \(\EQ{1}\) allocation.

\begin{algorithm}[ht]
\SetAlgoLined
\DontPrintSemicolon
\KwIn{An instance $\GenInstance$ with non-negative valuations.} 
\KwOut{An \EQ{1} allocation.}
$A_i \gets \phi$ for all $i \in N$.\;
$R \gets M$\tcp*{Set of remaining items}
$\ell \gets $ an arbitrary agent\tcp*{ this variable will maintain the agent who received the last valid bundle}
\While{$|R| \geq n$}{\label{step:while_nonneg}
    $\mathbb{V}\gets \{S\subset R :  |R\setminus S|\ge n-1, S\ne \emptyset\}$\tcp*{Set of all valid bundles}\label{step:v_nonneg}
    $(\ell,S)\gets \argmin\limits_{j\in N, T\in \mathbb{V}} v_j(A_j\cup T)$\tcp*{Break ties arbitrarily}
    $A_\ell \gets A_\ell \cup S$\;
    $R \gets R\setminus S$\;
}
\For{$j\in N\setminus \{\ell\}$}{\label{step:richest_nonneg}
    Pick any item $g \in R$ arbitrarily\;
    $A_j \gets A_j \cup \{g\}$\;
    $R \gets R \setminus \{g\}$\;
}\label{l:non_neg_end}

\Return{$\allocn$}
\caption{Non-negative valuations.}
\label{alg:global_non_neg}
\end{algorithm}

We first prove a loop invariant stating that, the agent who receives a valid bundle in any iteration is a rich agent at the end of that iteration. 
\begin{lemma}[Loop Invariant]\label{claim:rich_invariant}
    Let $\A^t$ be the allocation at the end of the $t^\text{th}$ iteration of the while loop in Line~\ref{step:while_nonneg} of Algorithm~\ref{alg:global_non_neg}. Suppose agent $i$ receives a bundle $S_t$ in this iteration. Then $i$ is a rich agent in $\A^t$, i.e.,
    \[
        v_i(A^t_i) \;\geq\; v_j(A^t_j) \quad \forall j \in N.
    \]
\end{lemma}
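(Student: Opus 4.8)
The plan is to argue by induction on the iteration count $t$, using the minimality property that defines the choice $(\ell, S)$ in each step. Fix the iteration $t$ in which agent $i$ receives the bundle $S_t$; thus $\ell = i$ and $S = S_t$ in that iteration, and by definition $(i, S_t) \in \argmin_{j \in N,\, T \in \mathbb{V}} v_j(A^{t-1}_j \cup T)$, where $\A^{t-1}$ is the allocation at the start of iteration $t$. Write $\mu := v_i(A^{t-1}_i \cup S_t) = v_i(A^t_i)$ for this minimum value. I want to show $v_j(A^t_j) \le \mu$ for every $j \ne i$ (for $j = i$ it is an equality).

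First I would handle the agents $j \ne i$ who have \emph{already received a valid bundle at some earlier iteration} $t' < t$; call the last such iteration for $j$ iteration $t'_j$, at which $j$ received some bundle $S_{t'_j}$. At that point $(j, S_{t'_j})$ was the minimizer, so $v_j(A^{t'_j}_j) = v_j(A^{t'_j - 1}_j \cup S_{t'_j}) \le v_i(A^{t'_j-1}_i \cup T)$ for every valid $T \in \mathbb{V}$ at iteration $t'_j$. Since agent $j$ received nothing after iteration $t'_j$ up to the start of iteration $t$, we have $A^t_j = A^{t'_j}_j$, so $v_j(A^t_j) = v_j(A^{t'_j}_j)$. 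It remains to relate the latter to $\mu$: I would take $T$ in the above inequality to be a valid bundle at iteration $t'_j$ that ``contains'' the eventual union $A^{t-1}_i \cup S_t$ restricted to the then-remaining pool — more precisely, observe that at iteration $t'_j$ the pool $R^{t'_j-1}$ contains all items that agent $i$ will later acquire together with $S_t$, minus what $i$ already held; the set $T = (A^{t-1}_i \setminus A^{t'_j-1}_i) \cup S_t$ is a subset of that pool, is non-empty, and is valid (its removal still leaves at least $n-1$ items, since it is contained in the pool at a strictly earlier stage and the pool only shrinks, ending at $n-1$). Then $v_i(A^{t'_j-1}_i \cup T) = v_i(A^{t-1}_i \cup S_t) = \mu$, giving $v_j(A^t_j) \le \mu$ as desired. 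For agents $j$ who have \emph{never} received a valid bundle before iteration $t$, we have $A^t_j = \emptyset$, and we need $v_j(\emptyset) \le \mu$; this follows by taking any non-empty valid $T$ at iteration $t$ (or rather, comparing at the iteration where $i$'s bundle was built) — one must show $v_j(\emptyset)$ is dominated, which again reduces to a minimality comparison against the singleton/bundle that $i$ accumulated.

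The delicate point — and the main obstacle — is the bookkeeping that makes the validity constraint $|R \setminus T| \ge n-1$ compatible with the comparison bundles $T$ I want to plug into earlier minimizations: the set $A^{t-1}_i \cup S_t$ is assembled across several iterations, and I must verify that its ``new part'' relative to any earlier time $t'_j$ was simultaneously available in the pool $R^{t'_j-1}$ and small enough to be valid there. The clean way to see validity is monotonicity of the pool: any subset of items that $i$ ends up holding (beyond what $i$ held at time $t'_j$) together with $S_t$ was a subset of $R^{t'_j-1}$, and since removing it from $R^{t'_j-1}$ leaves a superset of the final pool $R$ (which has size exactly $n-1$), validity holds. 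Once this is in place, the induction closes: each $j \ne i$ satisfies $v_j(A^t_j) \le \mu = v_i(A^t_i)$, so $i$ is rich in $\A^t$.
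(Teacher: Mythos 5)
Your proposal is correct in substance but takes a genuinely different route from the paper's proof, despite your framing it as ``induction on~$t$.'' The paper's argument is a genuine one-step induction: assuming the agent who received a bundle at iteration $t-1$ is rich, it compares the minimizer at iteration $t$ against the minimizer at iteration $t-1$, and the only validity checks needed are that $S_t$ (when a new agent $j\neq i$ receives) and $S_{t-1}\cup S_t$ (when the same agent receives twice in a row) were already valid at iteration $t-1$ --- each a one-step pool-monotonicity observation. Your argument does not actually invoke any induction hypothesis: for each $j\neq i$ you jump back to the last iteration $t'_j$ at which $j$ received a bundle and apply the minimality there directly, with the telescoping competitor $T=(A^{t-1}_i\setminus A^{t'_j-1}_i)\cup S_t$ chosen precisely so that $A^{t'_j-1}_i\cup T=A^t_i$, giving $v_j(A^t_j)=v_j(A^{t'_j}_j)\le v_i(A^{t'_j-1}_i\cup T)=\mu$. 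That works, and the validity of $T$ at iteration $t'_j$ can be checked as you sketch: $T$ consists of items allocated strictly after iteration $t'_j$, so $R^{t'_j-1}\setminus T$ contains $S_{t'_j}$ together with $R^{t-1}\setminus S_t$ (or, as you argue, the final pool of size $n-1$), hence has $\ge n-1$ elements. The trade-off is exactly the one you flag yourself: the paper's induction is modular and needs only a one-step validity check, whereas your direct argument avoids carrying a global invariant but requires the more delicate across-many-iterations pool bookkeeping. One spot where your write-up is noticeably hand-wavy is the case of an agent $j$ that has never received a bundle before iteration $t$: there $A^t_j=\emptyset$, and the needed bound $v_j(\emptyset)\le\mu$ does not come from any minimality comparison against $i$'s accumulated bundle but simply from the normalization $v_j(\emptyset)=0$ (used implicitly in the paper's base case as well) together with $\mu=v_i(A^t_i)\ge 0$ by non-negativity; your phrasing suggests a reduction that is not actually needed.
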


\begin{proof}
    We proceed by induction on $t$. Let $S_t$ and $R_t$ denote the bundle allocated in iteration $t$ and the pool at end of iteration $t$, respectively. Let $\A^t = (A^t_1,\dots,A^t_n)$ be the allocation at the end of iteration $t$.  

    \paragraph{Base case ($t=1$).} At the start, all agents have empty bundles, so $v_j(A^0_j)=0$ for all $j\in N$. In the first iteration, some agent $i$ receives a valid bundle $S_1$. Since valuations are nonnegative, $v_i(A^1_i)=v_i(S_1)\ge 0$, while every other agent still has a value of $0$. Hence $i$ is a rich agent in $\A^1$.  

    \paragraph{Inductive step.} Assume the claim holds at the end iteration $t-1$. That is, if agent $i$ received a bundle $S_{t-1}$ in iteration $t-1$, then $i$ is rich in $\A^{t-1}$. Let $\theta = v_i(A^{t-1}_i)$ be the value of the bundle of agent $i$ at this point.  

    Now consider iteration $t$, in which some agent $j$ receives $S_t$. We distinguish two complementary cases, depending on whether $j\ne i$ or $j=i$.
    \vskip0.2cm
    \textbf{Case A:} Agent $j\ne i$. It suffices to show that $v_j(A^t_j) \geq \theta$ as no other agent's bundle changes in this iteration. By the choice of $(j,S_t)$ in iteration $t$, we have
    \begin{align*}
        v_j(A^t_j) &= v_j(A^{t-1}_j \cup S_t) \nonumber \\
        &= v_j(A_j^{t-2} \cup S_t)\\
                   &\ge \min\limits_{\substack{\ell\in N\ \\ \text{valid }S \subset R^{t-2}}} v_\ell(A^{t-2}_\ell \cup S) \tag{since $S_t$ is also a valid set in iteration $t-1$} \\
                   &= v_i(A^{t-2}_i \cup S_{t-1}) \tag{since $(i,S_{t-1})$ is the minimizer in iteration $t-1$} \\
                   &= \theta 
    \end{align*}
    Hence, agent $j$ is rich in $\A^t$.

    \vskip0.2cm
    \textbf{Case B:} Agent $j=i$. That is, agent $i$ received a valid set $S_{t-1}$ in iteration $t-1$ and another valid set $S_t$ in iteration $t$. Note that since $S_t$ is valid at the start of iteration $t$, we have $|R_{t-1} \setminus S_t| \ge n-1$. Furthermore, since $R_{t-1} = R_{t-1}\setminus S_{t-1}$, we have $|R_{t-2} \setminus (S_{t-1} \cup S_t)| \ge n-1$, which implies that $S_{t-1} \cup S_t$ was a valid set at the start of iteration $t-1$. Therefore, we have

    \begin{align*}
        v_i(A^t_i) &= v_i(A^{t-1}_i \cup S_t) \nonumber \\
                   &= v_i(A^{t-2}_i \cup S_{t-1} \cup S_t) \\
                   &\ge \min\limits_{\substack{\ell\in N\ \\ \text{valid }S \subset R^{t-2}}} v_\ell(A^{t-2}_\ell \cup S) \tag{since $S_t \cup S_{t-1}$ was a valid set in iteration $t-1$} \\
                   &= v_i(A^{t-2}_i \cup S_{t-1}) \tag{since $(i,S_{t-1})$ is the minimizer in iteration $t-1$} \\
                   &= \theta
    \end{align*}

    Hence, agent $i$ is rich in $\A^t$. Therefore, by induction, the claim holds for all iterations $t$.
    
\end{proof}

Suppose agent $i$ receives a valid bundle $S$ in an iteration. By the above loop invariant, agent $i$ becomes a rich agent at the end of that iteration. Since $i,S$ is the minimizer over all $i\in N$ and valid subsets $S$ of $R$, any other agent who could have received a valid bundle in the same iteration would have obtained a value at least as large as that of agent $i$. We note this consequence as the following corollary.

\begin{corollary}
\label{cor:valid_bundle}
    Let $\mathbb{V}$ be the family of all valid subsets of the pool $R$ at the beginning of some iteration $t$. Let $\A^t$ be the allocation at the end of iteration $t$, where agent $j$ was allocated a valid subset $S$. Then, for any agent $i \in N\setminus\{j\}$ and any valid subset $T \in \mathbb{V}$,
    \[
    v_i(A^t_i \cup T) \ge v_j(A^t_j)
    \]
\end{corollary}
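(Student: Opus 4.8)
\textbf{Proof proposal for Corollary~\ref{cor:valid_bundle}.}

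The plan is to derive this directly from the minimality of the chosen pair $(j,S)$ in the $\argmin$ on line of Algorithm~\ref{alg:global_non_neg}, together with the fact that valid subsets of the pool at the start of iteration $t$ remain meaningful comparands. Fix an iteration $t$, let $\mathbb{V}$ be the family of valid subsets of the pool $R^{t-1}$ at the start of that iteration, and suppose $(j,S)$ is the selected minimizer, so $A^t_j = A^{t-1}_j \cup S$ and $A^t_i = A^{t-1}_i$ for every $i \neq j$. I would first observe that for any agent $i \neq j$ and any $T \in \mathbb{V}$, the pair $(i,T)$ was a candidate in the same $\argmin$, so $v_i(A^{t-1}_i \cup T) \ge v_j(A^{t-1}_j \cup S) = v_j(A^t_j)$. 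Since $A^{t-1}_i = A^t_i$, this is exactly the claimed inequality $v_i(A^t_i \cup T) \ge v_j(A^t_j)$.

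The only subtlety is confirming that $(i,T)$ is genuinely an admissible choice in the $\argmin$, i.e.\ that $T$ is a valid bundle relative to the pool at the start of iteration $t$ — but this is immediate since $T \in \mathbb{V}$ by hypothesis, and $\mathbb{V}$ is by definition the set over which the $\argmin$ ranges. One should also note that $i \ne j$ ensures $A^t_i \cup T = A^{t-1}_i \cup T$, so that no double-counting of $S$ occurs; this is why the statement excludes $i = j$.

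There is essentially no obstacle here: the corollary is a one-line unpacking of the definition of the $\argmin$, and the loop invariant (Lemma~\ref{claim:rich_invariant}) is not even needed for the proof — it is only the motivating remark in the surrounding text. I would therefore keep the proof to two or three sentences, emphasizing that the inequality holds because $(i,T)$ competes against $(j,S)$ in the selection step and loses (weakly).
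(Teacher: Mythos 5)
Your proof is correct and reconstructs exactly the argument the paper sketches in the prose preceding the corollary: the inequality is a one-line consequence of $(j,S)$ being the $\argmin$ over all pairs $(i,T)$ with $T\in\mathbb{V}$, combined with the fact that $A^t_i = A^{t-1}_i$ for $i\ne j$. Your side observation that Lemma~\ref{claim:rich_invariant} is not actually needed here (even though the paper labels this a ``corollary'' and mentions the invariant in the surrounding text) is also accurate — the loop invariant motivates the context, but the stated inequality follows purely from the minimizer property.
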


We now prove Theorem~\ref{thm:nonneg}.

\begin{proof}[Proof of Theorem~\ref{thm:nonneg}]
    Firstly, in each iteration of the while loop, we allocate a non-empty valid bundle to some agent. Since the pool starts with $m$ items and each valid bundle contains at most $m-(n-1)$ items, the while loop runs at most $m-(n-1)$ times. Thus, the algorithm terminates in a finite number of steps.

    Let $\A$ be the final allocation returned by the algorithm. If the instance had less than $n$ items, then the while loop in Line~\ref{step:while_nonneg} never runs. In this case, every agent gets at most one item, hence $\A$ is \EQ{1} with a lower \EQ{1} witness of $0$. So, consider the case when there are at least $n$ items.

    Suppose the while loop in Line~\ref{step:while_nonneg} runs for $t$ iterations and let 
    \(\A^t\) be the allocation at the end of the $t^\text{th}$ iteration. Let $\ell$ be the agent who received the last valid bundle $S_t$ in iteration $t$. By Lemma~\ref{claim:rich_invariant}, agent $\ell$ is a rich agent in $\A^t$. As agent \(\ell\) does not receive any more items the end of the while loop, we have \(A_\ell = A^t_\ell\) in the returned allocation $\A$. Let $\theta$ denote the final value of agent $\ell$, i.e., \(\theta = v_\ell(A_\ell)\). 

    Since \(t^\text{th}\) iteration was the last iteration of the while loop, and \(S_t\) was a valid bundle, the pool \(R\) contains exactly \(n-1\) items at the end of iteration \(t\). Now, to construct the final allocation \(A\), we give one item, say \(g_i\) from the pool to each agent \(i\) in \(N \setminus \{\ell\}\). Note that any singleton subset of \(R\) was a valid bundle at the beginning of iteration \(t\). So, by Corollary~\ref{cor:valid_bundle}, for any agent \(i \in N \setminus \{\ell\}\) and any item \(g \in R\), we have \(v_i(A^t_i \cup \{g\}) \ge v_\ell(A^t_\ell) = \theta\). Since agent \(i\) receives exactly one item \(g_i\) from the pool, we have \(v_i(A_i) = v_i(A^t_i \cup \{g_i\}) \ge \theta\). Also, $v_i(A_i \setminus \{g_i\}) = v_i(A_i^t) \leq \theta$. Hence, \(\theta\) is a lower \EQ{1} witness for the final allocation \(\A\). Thus, the algorithm always returns an \EQ{1} allocation.
\end{proof}

\subsection{Applications Beyond Fair Division}
\label{sec:graph_partitioning}
Many important functions in statistics and graph theory are nonnegative but not necessarily monotone. For example, statistical measures such as average, variance, and standard deviation, as well as graph-theoretic functions like the cut function and graph density, are all nonnegative and often non-monotone. Consequently, Theorem~\ref{thm:identical_subadd} has broad applicability across diverse domains. In this section, we highlight two notable applications, both of which have been previously explored in the context of fair division.

\subsection*{Equitable Graph Partitioning}
An interesting application of Theorem~\ref{thm:identical_subadd} is in equitable graph partitioning. Given an undirected graph \(G = (V, E)\) and an integer \(k \leq n\), the objective is to partition the vertex set \(V\) into \(k\) non-empty parts \(V_1, V_2, \ldots, V_k\) such that for any pair of parts \(V_i\) and \(V_j\), the difference in their cut values is at most \(\alpha\), i.e., \(|\delta(V_j) - \delta(V_i)| \leq \alpha\) for some \(\alpha \geq 0\). Here, \(\delta(V_i)\) denotes the cut value of \(V_i\), defined as the number of edges in \(E\) with exactly one endpoint in \(V_i\). This problem captures the problem of distributing the ``boundary'' edges as evenly as possible among the parts, and arises naturally in load balancing, network design, and parallel computing. The cut function is a classic example of a nonnegative, submodular, but non-monotone set function, making it a natural candidate for the application of our general existence results.

In~\cite{barman2025eq}, it was shown that for any undirected graph \(G\) and integer \(k\), there exists a partition of \(V\) into \(k\) non-empty parts such that the cut values differ by at most \(5\Delta + 1\), where \(\Delta\) is the maximum degree in \(G\). We improve this bound to \(\Delta\) by applying Theorem~\ref{thm:nonneg}. {{This improvement was also shown by \cite{hosseini25} by exhibiting the existence of EF1 and \emph{(weak) Transfer Stable} allocations for \emph{cut-valuations}}}. Likewise, we model the problem as a fair division instance with \(k\) agents, where the items are the vertices and each agent has the identical nonnegative cut function as their valuation. By applying Theorem~\ref{thm:nonneg}, we obtain an \EQ{1} partition with a lower witness threshold. This means that for any two parts \(V_i\) and \(V_j\) with \(\delta(V_i) < \delta(V_j)\), there exists a vertex \(u \in V_j\) such that \(\delta(V_j \setminus \{u\}) \leq \delta(V_i)\). Since adding or removing any vertex changes the cut value by at most \(\Delta\), it follows that \(\delta(V_j) \leq \delta(V_i) + \Delta\). Therefore, we achieve a partition into \(k\) non-empty parts such that the cut values differ by at most \(\Delta\), improving upon the previous bound.

While the existence result above depends on an exponential time algorithm( algorithm~\ref{alg:global_non_neg}), we can leverage additional structure in the cut function. Specifically, the cut function is both nonnegative and submodular. By modifying Algorithm~\ref{alg:marginal_witness}, we show that for any nonnegative submodular function, an \EQ{1} allocation with non-empty bundles can be found in polynomial time, if the number of items is greater than or equal to the number of agents.

\begin{restatable}{theorem}{graphs}
\label{thm:graphs}
Given a fair division instance \(\GenInstance\) with $
|M| \geq |N|$, where each valuation \(v_i\) is both nonnegative and submodular, there exists a polynomial-time algorithm that finds an \EQ{1} allocation in which every agent receives a non-empty bundle. Moreover, this allocation admits a lower \EQ{1} witness.
\end{restatable}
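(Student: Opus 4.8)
\textbf{Proof proposal for \Cref{thm:graphs}.}

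The plan is to adapt Algorithm~\ref{alg:marginal_witness} so that it (i) respects a validity constraint ensuring that at least $n-1$ items remain in the pool when the main loop finishes, mirroring the device used in Algorithm~\ref{alg:global_non_neg}, and (ii) still runs in polynomial time by exploiting submodularity to find, in each iteration, a single item of nonnegative marginal for the poorest agent. Concretely, in each iteration let $p$ be a poorest agent with $\mu = v_p(A_p)$, let $R$ be the pool, and consider only the single-item moves $g \in R$; we will keep allocating one item at a time to a poorest agent, but we stop the loop as soon as $|R| = n-1$ rather than $R = \emptyset$, and then distribute the last $n-1$ items one per agent among $N \setminus \{\ell\}$, where $\ell$ is the last agent to receive an item in the loop. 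We also retain the two early-termination checks of Algorithm~\ref{alg:marginal_witness}, but restricted so that the bundle handed over keeps at least $n-1$ items in the remaining pool.

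The first step is to argue the loop is well defined whenever $|R| \ge n$: since valuations are nonnegative and submodular (hence subadditive on disjoint sets), Invariant~2-type reasoning gives $v_p(A_p \cup R) \ge \mu^{t-1}$, and when the early-termination tests fail we get $v_p(A_p \cup R) > \mu$; submodularity then guarantees an item $g \in R$ with $v_p(A_p \cup \{g\}) \ge v_p(A_p \cup R) - \sum_{h \in R \setminus \{g\}} v_p(\cdot \mid h)$-type bounds fail to help directly, so instead I would invoke the marginal-witness property (which submodular valuations satisfy, by \Cref{thm:submodular}'s supporting lemma in Appendix~\ref{apndx:submod_dmon}): there is a single $g \in R$ with $v_p(A_p \cup \{g\}) \ge v_p(A_p) = \mu$. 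This is exactly the step in Algorithm~\ref{alg:marginal_witness} that keeps the allocation EQ1, so \textbf{Invariant 1} and \textbf{Invariant 2} from \Cref{lemma:inv_marg} carry over essentially verbatim, now with the added bookkeeping that the loop terminates with $|R| = n-1$.

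The second step is the final assignment. When the loop ends, $|R| = n-1$; let $\theta = \mu^{(\text{last})} = v_\ell(A_\ell)$ be the value of the last agent $\ell$ to have received an item (who, by the Invariant~1 argument, is a poorest agent at that moment, so every agent has value $\ge \theta$, but more importantly every agent $j \neq \ell$ can absorb any single remaining item without dropping below $\theta$---this is the submodular analogue of \Cref{cor:valid_bundle}, and it is where I need: if the early-termination tests failed, then for every $j$, $v_j(A_j \cup R \setminus \{h\}) > \mu$ for all $h$, and submodularity lets me descend from $R \setminus \{h\}$ to a single item while only increasing the marginal). Handing each $j \in N \setminus \{\ell\}$ one item $g_j \in R$ then yields $v_j(A_j \cup \{g_j\}) \ge \theta$ and $v_j((A_j \cup \{g_j\}) \setminus \{g_j\}) = v_j(A_j) \le \theta$, so $\theta$ is a lower EQ1 witness; nonemptiness holds because $|M| \ge |N|$ forces at least one loop iteration per agent eventually, or more simply because $\ell$ got an item in the loop and everyone else gets one in the final round. (If $|M| = |N|$ exactly, a degenerate check gives each agent exactly one item.)

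The main obstacle I expect is reconciling the \emph{validity} restriction (keep $\ge n-1$ items in the pool) with the \emph{one-item-at-a-time} mechanics: in Algorithm~\ref{alg:global_non_neg} validity is imposed on arbitrary subsets and the minimizer argument handles everything, but here I am forced to give out singletons to the current poorest agent, so I must check that the early-termination tests fire at the right time---specifically, that I never get "stuck" with, say, $|R| = n-1$ but no valid EQ1-preserving assignment of those items. The resolution is that with $|R| = n-1$ the loop simply halts and the corollary above shows the one-per-agent round always works; the subtle point to verify carefully is that when an early-termination test passes with a large leftover bundle, allocating that bundle to agent $i$ and then still needing to seed the other $n-1$ agents from a now-smaller pool is consistent---which is why the tests must be phrased to leave $\ge n-1$ items behind, and why the validity constraint must be threaded through both \textbf{Invariant 2} and the termination analysis exactly as in the proof of \Cref{thm:nonneg}.
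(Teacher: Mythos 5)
Your approach diverges sharply from the paper's, and I do not think it can be patched without substantial additional work, while the actual fix is far simpler than what you propose.

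\textbf{What the paper does.} Since valuations are nonnegative, before the while loop of Algorithm~\ref{alg:marginal_witness} the paper simply hands each agent one arbitrary item (this is possible because $|M|\ge|N|$) and then runs the algorithm \emph{unchanged}. Nonnegativity gives $v_j(\{e_j\})\ge 0 = \mu^0$ and $v_j(\{e_j\}\setminus\{e_j\})=v_j(\emptyset)=0\le\mu^0$, so Invariant~1 holds at $t=1$; and $v_j(\{e_j\}\cup R^1)\ge 0=\mu^0$ by nonnegativity, so Invariant~2 holds as well. Since the algorithm never removes an item from a bundle, every bundle stays non-empty, and the rest of Theorem~\ref{thm:marginal_witness}'s proof carries over verbatim. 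No validity constraint, no modified termination, no endgame distribution.

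\textbf{Where your proposal breaks.} You are trying to graft the machinery of Algorithm~\ref{alg:global_non_neg} (validity, stopping at $|R|=n-1$, Corollary~\ref{cor:valid_bundle}) onto Algorithm~\ref{alg:marginal_witness}, but the two algorithms have opposite invariant structure: in Algorithm~\ref{alg:global_non_neg} the recipient of each step minimizes $v_j(A_j\cup T)$ over \emph{all} agents and valid $T$ and consequently becomes a \emph{rich} agent, which is exactly what makes Corollary~\ref{cor:valid_bundle} hold by the minimizer argument. In Algorithm~\ref{alg:marginal_witness} the \emph{poorest} agent receives a single item that keeps her at or above the poverty line, and she does not become rich. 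Hence your claimed ``submodular analogue of Corollary~\ref{cor:valid_bundle}'' --- that every agent $j\ne\ell$ can absorb \emph{any} remaining item without dropping below $\theta$ --- is unsupported: the marginal-witness property only promises that \emph{some} single item in $R$ has nonnegative marginal for $j$, not every item, and you never make the minimizer argument that Corollary~\ref{cor:valid_bundle} relies on. You also set $\theta=v_\ell(A_\ell)$ for $\ell$ the \emph{last poorest} agent after she receives an item, but at that point other agents may have values below $v_\ell(A_\ell)$, so $\theta$ is not a uniform lower bound in your construction. Finally, restricting the early-termination tests of Algorithm~\ref{alg:marginal_witness} (which by design allocate \emph{all} of $R$) to leave $\ge n-1$ items behind changes their semantics, and neither the claim they certify nor the subsequent invariant bookkeeping is re-derived. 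None of these gaps is filled; the seeding idea above avoids all of them.
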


The modification to the algorithm and the proof of this theorem are provided in Appendix~\ref{app:submodular_polytime}. Applying this result to the graph partitioning problem, we obtain the following corollary.

\begin{corollary}
Given an undirected graph \(G = (V, E)\) with maximum degree \(\Delta\) and an integer \(k \leq |V|\), there exists a polynomial-time algorithm to partition \(V\) into \(k\) non-empty parts such that the cut values of the parts differ by at most \(\Delta\).
\end{corollary}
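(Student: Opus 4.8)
The plan is to obtain this corollary as a direct instantiation of \Cref{thm:graphs}. First I would set up the corresponding fair division instance: take the item set $M = V$, an agent set $N$ with $|N| = k$, and give every agent the \emph{identical} valuation $v_i = \delta$, where $\delta(S)$ denotes the number of edges of $G$ with exactly one endpoint in $S$. Two facts make this instance eligible for \Cref{thm:graphs}: (i) $\delta$ is \emph{nonnegative}, since it counts edges; and (ii) $\delta$ is \emph{submodular}, which is the classical submodularity of the graph cut function. Moreover, $\delta$ admits a trivial polynomial-time evaluation oracle (count crossing edges). Since $k \le |V|$, we have $|M| \ge |N|$, so \Cref{thm:graphs} yields, in polynomial time, an allocation $\mathcal{A} = (A_1, \dots, A_k)$ in which every part $A_i$ is non-empty and which admits a lower \EQ{1} witness $\theta \in \mathbb{R}$.

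Next I would translate the witness guarantee back into cut-value language and run an elementary degree argument. By the definition of a lower \EQ{1} witness, $\delta(A_i) \ge \theta$ for every part, and for each part $A_j$ either $\delta(A_j) = \theta$ or there is a vertex $u \in A_j$ with $\delta(A_j \setminus \{u\}) \le \theta$. The key estimate is that deleting a single vertex $u$ from a set $S$ changes its cut value by at most $\deg(u) \le \Delta$ in absolute value: writing $d$ for the number of neighbours of $u$ lying in $S \setminus \{u\}$, one has $\delta(S) - \delta(S \setminus \{u\}) = \deg(u) - 2d$, which lies in $[-\Delta, \Delta]$. Hence, for any pair of parts with $\delta(V_i) \le \delta(V_j)$: if $\delta(V_j) = \theta$ the gap is at most $\theta - \theta = 0$; otherwise $\delta(V_j) \le \delta(V_j \setminus \{u\}) + \Delta \le \theta + \Delta \le \delta(V_i) + \Delta$. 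In every case $|\delta(V_i) - \delta(V_j)| \le \Delta$, and the output is a partition of $V$ into $k$ non-empty parts computed in polynomial time, as required.

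There is essentially no hard step here — the statement is a repackaging of \Cref{thm:graphs} for the cut function — but the one place that needs care is ensuring we invoke the \emph{lower} \EQ{1} witness, which only ever removes a vertex from the \emph{richer} part, rather than the two-sided \EQ{1} definition; this one-sidedness is exactly what lets us conclude $\delta(V_j) \le \theta + \Delta$ without separately having to upper-bound $\delta(V_i)$. The only other point to double-check is that the non-emptiness clause of \Cref{thm:graphs} is genuinely available — it is, since $|M| = |V| \ge k = |N|$ — so that the returned allocation is a bona fide partition into $k$ parts and not merely an allocation with possibly empty bundles.
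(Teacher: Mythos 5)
Your proposal is correct and follows essentially the same route as the paper: model the vertices as items with $k$ agents sharing the identical (nonnegative, submodular) cut valuation, invoke \Cref{thm:graphs} to get a polynomial-time \EQ{1} allocation with non-empty bundles and a lower \EQ{1} witness, and then use the fact that removing one vertex changes the cut value by at most $\Delta$ to bound the gap between any two parts. Your explicit computation $\delta(S)-\delta(S\setminus\{u\})=\deg(u)-2d$ and the remark about the one-sidedness of the lower witness are just slightly more detailed versions of the paper's own (terser) argument.
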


\subsection*{Uniformly Dense Graphs}

Given an undirected graph \(G = (V, E)\), the density \(\rho(S)\) of a subset \(S \subseteq V\) is defined as the ratio of the number of edges with both endpoints in \(S\), to \(|S|\), i.e., \(\rho(S) = \frac{|E(S)|}{|S|}\), where \(E(S)\) denotes the set of such edges. The density function is nonnegative, and the marginal contribution of any vertex to the density is at most 1~\cite{barman2025eq}.

\citeauthor{barman2025eq}~\cite{barman2025eq} showed that for any undirected graph and integer \(k\), the vertex set can be partitioned into \(k\) non-empty parts such that the densities of any two parts differ by at most $4$. As a direct consequence of Theorem~\ref{thm:nonneg}, we improve this factor to $1$.

\begin{corollary}
Given an undirected graph \(G = (V, E)\) and an integer \(k \leq |V|\), there exists a partition of \(V\) into \(k\) non-empty parts such that the densities of any two parts differ by at most $1$.
\end{corollary}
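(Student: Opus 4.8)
The plan is to model the graph partitioning problem as a fair division instance and invoke Theorem~\ref{thm:nonneg}. Take the $k$ agents to correspond to the $k$ parts, the items to be the vertices $V$ (so $|M| = |V| \ge k = |N|$), and give every agent the \emph{identical} valuation $v_i = \rho$, the density function $\rho(S) = |E(S)|/|S|$ (with $\rho(\emptyset) = 0$). Since $\rho$ is nonnegative, the hypotheses of Theorem~\ref{thm:nonneg} are satisfied, so there exists an \EQ{1} allocation $\mathcal{A} = (A_1, \dots, A_k)$; moreover, since $|M| \ge |N|$, the theorem guarantees every $A_i$ is non-empty, which is exactly the requirement that each part be non-empty.

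\medskip

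It remains to convert the \EQ{1} guarantee into the numerical bound ``densities differ by at most $1$.'' Consider any two parts $V_i = A_i$ and $V_j = A_j$ with $\rho(V_i) < \rho(V_j)$. By the definition of \EQ{1} (and in fact Theorem~\ref{thm:nonneg} provides a lower \EQ{1} witness, which is even cleaner), there is a vertex $u \in V_j$ such that $\rho(V_j \setminus \{u\}) \le \rho(V_i)$. The key quantitative fact, recorded above and established in~\cite{barman2025eq}, is that the marginal contribution of any single vertex to the density is at most $1$ in absolute value: $|\rho(S) - \rho(S \setminus \{u\})| \le 1$ for any $S$ and any $u \in S$. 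Applying this with $S = V_j$ gives $\rho(V_j) \le \rho(V_j \setminus \{u\}) + 1 \le \rho(V_i) + 1$. Hence $|\rho(V_j) - \rho(V_i)| \le 1$ for every pair, which is the claimed bound.

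\medskip

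The only mild subtlety is verifying the bounded-marginal claim for the density function, i.e.\ that removing a vertex from $S$ changes $\rho(S) = |E(S)|/|S|$ by at most $1$; this is a short direct computation (writing $d$ for the number of edges from $u$ into $S \setminus \{u\}$, one checks $\rho(S) - \rho(S\setminus\{u\}) = \frac{|E(S)| - d(|S|-1) - |E(S)|\cdot 0}{|S|}$ type identity and bounds it), but it is already cited from~\cite{barman2025eq}, so no real obstacle arises. I do not anticipate any genuine difficulty here: the entire corollary is a direct instantiation of Theorem~\ref{thm:nonneg} together with the well-known Lipschitz-type property of graph density, exactly paralleling the equitable graph partitioning argument given just above for the cut function.
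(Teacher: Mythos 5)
Your proposal is correct and follows exactly the same approach as the paper: model the $k$ parts as agents with identical density valuation $\rho$, apply Theorem~\ref{thm:nonneg} to obtain an \EQ{1} allocation (with non-empty bundles and a lower \EQ{1} witness), and combine with the cited fact from~\cite{barman2025eq} that the marginal contribution of a vertex to $\rho$ is bounded by $1$. No gap to report.
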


\section{Identical Subadditive Valuations}
\label{sec:identical_subadditive}

We shall now consider a special case of subadditive valuations, where all agents have identical valuation functions. Given such an instance, we show that Algorithm~\ref{alg:identical_subadd} always returns an \EQ{1} allocation, provided that the common valuation function \(v(\cdot)\) is nonnegative for the grand bundle \(M\), i.e., \(v(M) \ge 0\).

\begin{restatable}{theorem}{identicalsubadd}
\label{thm:identical_subadd}
Given a fair division instance $\mathcal{I} = \langle N, M, \{v\}_{i \in N} \rangle$, where the agents share an identical subadditive valuation function and the grand bundle is valued nonnegatively, \Cref{alg:identical_subadd} returns an \EQ{1} allocation.
\end{restatable}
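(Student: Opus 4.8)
The goal is to show that the allocation $\mathcal A=(A_1,\dots,A_n)$ returned by \Cref{alg:identical_subadd} is \EQ{1}; since the agents share a common valuation $v$, this is equivalent to \EF{1}. I would prove it by exhibiting a \emph{lower \EQ{1} witness} $\theta$ for $\mathcal A$ in the sense of \Cref{def:lower-eq1-witness} and appealing to \Cref{prop:witness-implies-eq1}; the degenerate cases the algorithm presumably handles separately (such as $|M|<|N|$ or $v(M)=0$) may instead call for a short direct check of the \EQ{1} definition, since there a uniform witness need not exist. The single inequality doing the heavy lifting is subadditivity applied to the partition itself: $v(M)=v(\bigcup_i A_i)\le\sum_i v(A_i)$, hence $\sum_i v(A_i)\ge v(M)\ge 0$, and more to the point $v(A_j)\ge v(M)-\sum_{i\ne j}v(A_i)$ for every $j$ — a \emph{lower} bound on one bundle in terms of the others. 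The other tool is a ``first crossing'' observation: if a bundle $B$ is grown by scanning a fixed item order and is cut off the first time $v(B)\ge\theta$, then its last-added item $g$ satisfies $v(B\setminus\{g\})<\theta\le v(B)$, which is exactly both witness conditions at level $\theta$.

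Concretely, I expect \Cref{alg:identical_subadd} to build the bundles $A_1,\dots,A_{n-1}$ one at a time as minimal prefixes of the shrinking pool whose value reaches a target $\theta$, handing the remainder to the last agent — a multi-cut analogue of the knife sweep used for \Cref{thm:two_agents}. I would first verify termination and well-definedness (each round allocates at least one fresh item, and the pool does not empty before all $n-1$ bundles are formed, which follows from $v(M)\ge 0$ together with the extremal choice of $\theta$). Then, taking $\theta$ to be the largest target for which the first $n-1$ prefixes can be completed while leaving a remainder of value $\ge\theta$, the first-crossing property gives $v(A_k)\ge\theta$ and an item $g_k\in A_k$ with $v(A_k\setminus\{g_k\})<\theta$ for each $k<n$; the global inequality gives $v(A_n)\ge v(M)-\sum_{k<n}v(A_k)\ge\theta$; and maximality of $\theta$ supplies the ``remove one item'' witness for $A_n$ as well, since any larger target would fail either by overshooting an early prefix (freeing an item whose removal still leaves the remainder under threshold) or by pushing the remainder below threshold. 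Hence $\theta$ is a lower \EQ{1} witness for $\mathcal A$.

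The main obstacle is non-monotonicity: adding items to a bundle can strictly \emph{decrease} its value, and subadditivity only ever \emph{upper}-bounds the value of a union — it gives no lower bound on $v(S\cup T)$. So the comfortable ``the bag's value only goes up as it fills'' reasoning from the monotone case is unavailable, and ``stop at the first crossing of $\theta$'' is not monotone in the number of scanned items. I would get around this by never arguing monotonicity of bag values, relying only on (a) the local, order-based inequality $v(B\setminus\{g\})<\theta\le v(B)$ at each cut, and (b) the global subadditive inequality $v(M)\le\sum_i v(A_i)$, which is what genuinely pins the remainder's value from below. Making the threshold/extremality argument precise — in particular ruling out that the greedy phase exhausts the pool prematurely, and separating the case $v(A_n)=\theta$ from $v(A_n)>\theta$ — is the delicate technical core of the proof.
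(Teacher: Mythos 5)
There is a genuine gap, and it is exactly the one you flagged but did not actually close. Your reconstruction guesses that \Cref{alg:identical_subadd} is a multi-cut knife sweep that builds $A_1,\dots,A_{n-1}$ as minimal prefixes of a fixed item order that reach some target $\theta$, with the remainder going to agent $n$. That is not what the algorithm does, and more importantly the prefix/first-crossing reasoning breaks under non-monotonicity on its own terms: with a non-monotone valuation a prefix is not guaranteed to ever cross $\theta$ (it can stay below, or oscillate), so the local inequality $v(B\setminus\{g\})<\theta\le v(B)$ that carries both witness conditions may simply be unavailable, and you cannot argue termination of the greedy phase. Your global inequality is also pointing the wrong way: from $v(M)\le\sum_i v(A_i)$ you get $v(A_n)\ge v(M)-\sum_{k<n}v(A_k)$, but if each $v(A_k)\ge\theta$ for $k<n$ (which is what first-crossing would yield), this only gives $v(A_n)\ge v(M)-(n-1)\theta$, which can be far below $\theta$. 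So the claim $v(A_n)\ge\theta$ does not follow, and the ``maximality of $\theta$'' appeal is not a real argument but a placeholder for the step that is hard.

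The paper's \Cref{alg:identical_subadd} avoids all of this by reducing to the nonnegative case rather than doing a sweep. It extracts a maximum-cardinality subset $L$ with $v(L)<0$ and gives it to one agent. The two facts that then do the heavy lifting are: (i) by maximality of $L$, every strictly larger set has nonnegative value, so $v(L\cup S)\ge 0$ for any $S$ disjoint from $L$; and (ii) by subadditivity on the \emph{disjoint} pair $L,S$, we get $v(S)\ge v(L\cup S)-v(L)>0$ for every nonempty $S\subseteq M\setminus L$. In other words, on the remaining pool the valuation is strictly positive on every nonempty set, so the loop invariant of \Cref{alg:global_non_neg} (\Cref{claim:rich_invariant}, which takes a \emph{minimizer over all valid subsets}, not a prefix sweep, precisely to sidestep non-monotonicity) holds from the start, and the nonnegative-valuation analysis finishes the job. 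The small-pool case $|R|\le n$ is a direct check that $\theta=0$ is a lower \EQ{1} witness: $v(L\cup\{g\})\ge 0$ with removal of $g$ giving $v(L)<0$, and each other agent has a singleton (or nothing) of nonnegative value. Without the step of isolating $L$ and the subset-minimization mechanism, your proposal does not recover these facts, so the argument does not go through.
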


Note that when agents have identical valuations, an allocation that is \EQ{1} is also \EF{1}. Therefore, we have the following corollary.

\begin{restatable}{corollary}{identical}
\label{cor:identical}
Given a fair division instance $\mathcal{I} = \langle N, M, \{v\}_{i \in N} \rangle$, where the agents share an identical subadditive valuation function and the grand bundle is valued nonnegatively, \Cref{alg:identical_subadd} returns an \EF{1} allocation.
\end{restatable}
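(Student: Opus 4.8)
The plan is to derive the corollary directly from \Cref{thm:identical_subadd} together with the elementary observation that, when all agents share a common valuation function, the notions \EQ{1} and \EF{1} describe exactly the same set of complete allocations. \Cref{thm:identical_subadd} already guarantees that \Cref{alg:identical_subadd} outputs a complete \EQ{1} allocation $\mathcal{A} = (A_1,\dots,A_n)$ on the given instance, so all that remains is to argue that this allocation is \EF{1}.

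First I would unfold both definitions for an ordered pair of agents $(i,j)$ under the identical valuation $v$ (the common value of $v_1,\dots,v_n$), so that $v_i(A_k) = v(A_k)$ for every agent $i$ and every bundle $A_k$. In this setting the \EF{1} requirement for the pair $(i,j)$ is triggered precisely when $v(A_j) > v(A_i)$ (i.e.\ agent $i$ envies agent $j$), and it asks for either a good $g \in A_j$ with $v(A_j \setminus \{g\}) \le v(A_i)$, or a chore $c \in A_i$ with $v(A_j) \le v(A_i \setminus \{c\})$; this two-sided reading is the appropriate one for possibly non-monotone valuations and mirrors the two-sided \EQ{1} definition used throughout the paper. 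The \EQ{1} condition applied to the same pair, taking $i$ as the poorer agent ($v(A_i) < v(A_j)$), gives exactly: either $\exists g \in A_j$ with $v(A_i) \ge v(A_j \setminus \{g\})$, or $\exists c \in A_i$ with $v(A_i \setminus \{c\}) \ge v(A_j)$. After substituting the identical valuation, these two disjunctions are literally the same statement, so any \EQ{1} witness item --- whether it is removed from the richer bundle or from the poorer bundle --- is simultaneously an \EF{1} witness that kills the envy of $i$ towards $j$. Quantifying over all ordered pairs of agents yields the equivalence \EQ{1} $\Longleftrightarrow$ \EF{1} for identical valuations.

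Combining the two ingredients finishes the proof: by \Cref{thm:identical_subadd} the allocation returned by \Cref{alg:identical_subadd} is \EQ{1}, and by the equivalence just established it is therefore \EF{1}; since the allocation is complete, every pairwise envy comparison is well-defined and no further case analysis is needed. I do not expect a genuine obstacle here --- the statement is essentially a transcription of \Cref{thm:identical_subadd} through a definitional equivalence. The only point requiring care is bookkeeping: one must match the ``remove a good from the envied bundle'' and ``remove a chore from the envier's own bundle'' branches of \EF{1} against the correct sides of the \EQ{1} disjunction, since in the non-monotone / mixed-manna regime the one-sided intuition valid for the all-goods case no longer applies.
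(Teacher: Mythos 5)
Your proposal is correct and follows exactly the paper's route: the paper also derives the corollary from \Cref{thm:identical_subadd} together with the one-line observation that under identical valuations an \EQ{1} allocation is \EF{1} (your unfolding of the two-sided definitions just makes that observation explicit). No gap; nothing further needed.
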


\paragraph{Algorithm Overview.} The core idea of our algorithm is a reduction to the nonnegative case as follows: we start by giving a largest cardinality subset \(L\) of items valued negatively, i.e., $v(L)<0$,  to an arbitrary agent $\ell \in N$. Now, from subadditivity of $v$, and the maximality of $L$, we get that any subset of the remaining items \(R = M \setminus L\) has a nonnegative value. If there are at most $n$ items left in \(R\), we simply give one more item to agent $\ell$ and distribute the remaining items, one each, to some of the other agents. On the other hand, if there are more than $n$ items in the pool, note that the current allocation (with agent $\ell$ having $L$ and the rest of the agents having the empty bundle) satisfies the loop invariant (informally that giving any agent a valid subset of the remaining items, makes it a rich agent) defined in Lemma~\ref{claim:rich_invariant}, and the set of valid subsets is non empty. Hence, we proceed exactly as in Algorithm~\ref{alg:global_non_neg}, that is, by repeatedly allocating a valid bundle \(S \subseteq R\) to an agent \(i\) that minimizes \(v_i(A_i \cup S)\) across all agents and valid bundles. We repeat this until there are no valid bundles left, and then give one remaining item to each of the other agents, except the one who received the last valid bundle.

\begin{algorithm}[ht]
\SetAlgoLined
\DontPrintSemicolon

\KwIn{An instance $\mathcal{I} = \langle N, M, v\rangle$ with identical subadditive valuation $v$, where $v(M) \geq 0$.}

\KwOut{An \EQ{1} allocation.}
$A_i \gets \phi$ for all $i \in N$.\;
$R \gets M$\tcp*{Set of remaining items}
$L \gets \argmax\limits_{S\subseteq M, v(S)<0} |S|$\tcp*{A largest cardinality subset with negative value}\label{step:largest_neg}
$\ell \gets $ an arbitrary agent\\
$A_\ell \gets L$\tcp*{Give $L$ to an arbitrary agent $\ell$}
$R \gets R \setminus L$\tcp*{Remove $L$ from the pool of items}
\If{$|R| \le n$}{
    $g \gets $ an arbitrary item from $R$\\
    $A_\ell \gets A_\ell \cup \{g\}$\\
    $R \gets R \setminus \{g\}$\\
    $X \gets N\setminus \{\ell\}$\\
    \While{$R \neq \emptyset$}{
        $j \gets$ an arbitrary agent from $X$\;
        $g \gets$ an arbitrary item from $R$\;
        $A_j \gets A_j \cup \{g\}$\;
        $R \gets R \setminus \{g\}$\;
        $X \gets X \setminus \{j\}$
    }
}

\Else{

    Run lines~\ref{step:while_nonneg} to ~\ref{l:non_neg_end} of Algorithm~\ref{alg:global_non_neg}
}

\Return{$\allocn$}
\caption{Identical subadditive valuations.}
\label{alg:identical_subadd}
\end{algorithm}

\begin{proof}[Proof of \Cref{thm:identical_subadd}]
    Consider first the case when $|R| > n$. In this case, we simply run the while loop from algorithm~\ref{alg:global_non_neg}. Note from the proof of Theorem~\ref{thm:nonneg} that, running the while loop of ~\ref{alg:global_non_neg} starting with a partial allocation $A$ yields an \EQ{1} allocation if the following two conditions are satisfied (where $R = M \setminus \cup_{i \in N} A_i$):

    \begin{enumerate}
        \item $|R| \geq n$.
        \item The invariant from Lemma~\ref{claim:rich_invariant} holds, that is, $v_i(A_i \cup S) \geq \max_{i \in N} v_i A_i$, for all $i \in N$, non-empty subsets $S \subseteq R$, such that $|R \setminus S| \geq n - 1$.
    \end{enumerate}

    Our invocation of the while loop of Algorithm~\ref{alg:global_non_neg} begins with the allocation $A_\ell = L$, $A_i = \emptyset$ for all $i \in N \setminus \ell$, and $|R| > n$. The second condition is satisfied because the right hand side ($\max_{i \in N} v_i(A_i)$) is $0$. Now, consider a non-empty subset $S \subseteq R$. By subadditivity, $v(L) + v(S) \geq v(L \cup S)$.  Also, $v(L \cup S) \geq 0$ (and hence the second condition is satisfied for agent $\ell$), since $L$ a maximum cardinality subset with a negative value. Thus, $v(S) \geq v(S \cup L) - v(L) > 0$ as $v(L) < 0$, and hence the second condition also holds for agents $i \in N \setminus \ell$.

    In the case when $|R| \le n$, we give one item to $\ell$ (note that $|R| > 0$ since $L \neq M$, as $v(M) \ge 0$), and then we give one item each to some of the other agents. It is trivial to see that $0$ is a lower \EQ{1} witness for the final allocation. This is because $v(L) < 0$, and $v(L \cup \{g\}) \geq 0$ for all $g \in R$, by the maximality of $L$. Also, $v(\{g\}) > 0$ for all $g \in R$, since $v(S) > 0$ for all non-empty $S \subseteq R$.

   \end{proof}
\section{The Case of nonpositive Valuations}
\label{sec:nonpos}

Until now, we considered fair division instances where the agents have non-monotone valuations but the grand bundle is nonnegatively valued by all agents. In this section, we show that all our positive results for \EQ{1} allocations extend to the case when the grand bundle is nonpositively valued by all agents, i.e., \(v_i(M) \leq 0\) for all \(i \in N\). Towards this, we prove a more general result that shows that \EQ{1} is preserved when the instance is transformed by negating the valuation functions of all agents. Formally stated as follows:

\begin{lemma}\label{thm:nonpos_to_nonneg}
    Given a fair division instance $\GenInstance$, an allocation $\A$ is \EQ{1} for $\GenInstance$ if and only if it is \EQ{1} for the instance $\calI' = \langle N, M, \{-v_i\}_{i \in N} \rangle$.
\end{lemma}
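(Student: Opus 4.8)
The plan is to unwind the definition of \EQ{1} and observe that negating every valuation turns the "lower" story into the "upper" story and vice versa, so the symmetry is essentially built into the definition. Concretely, fix an allocation $\mathcal{A} = (A_1,\dots,A_n)$ and write $w_i = -v_i$. Suppose $\mathcal{A}$ is \EQ{1} for $\mathcal{I}$; I want to show it is \EQ{1} for $\mathcal{I}' = \langle N, M, \{w_i\}_{i\in N}\rangle$. Take any pair $i,j$ with $w_i(A_i) < w_j(A_j)$. Multiplying by $-1$, this is exactly $v_i(A_i) > v_j(A_j)$, i.e., $v_j(A_j) < v_i(A_i)$. Applying the \EQ{1} guarantee for $\mathcal{I}$ to the ordered pair $(j,i)$, one of two things holds: either there is $g \in A_i$ with $v_j(A_j) \ge v_i(A_i \setminus \{g\})$, or there is $c \in A_j$ with $v_j(A_j \setminus \{c\}) \ge v_i(A_i)$.

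Now I just translate each disjunct back through the negation. In the first case, $v_j(A_j) \ge v_i(A_i \setminus \{g\})$ becomes $w_j(A_j) \le w_i(A_i \setminus \{g\})$, equivalently $w_i(A_i \setminus \{g\}) \ge w_j(A_j)$ with $g \in A_i$ — which is precisely the "remove one item from the poorer agent $i$'s bundle" clause of \EQ{1} for the instance $\mathcal{I}'$ applied to the pair $(i,j)$. In the second case, $v_j(A_j \setminus \{c\}) \ge v_i(A_i)$ becomes $w_j(A_j \setminus \{c\}) \le w_i(A_i)$, i.e., $w_i(A_i) \ge w_j(A_j \setminus \{c\})$ with $c \in A_j$ — which is exactly the "remove one item from the richer agent $j$'s bundle" clause of \EQ{1} for $\mathcal{I}'$. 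Hence $\mathcal{A}$ satisfies \EQ{1} for $\mathcal{I}'$. Since $(-(-v_i)) = v_i$, the same argument run on $\mathcal{I}'$ gives the converse, so the equivalence holds.

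There is really no substantive obstacle here; the only thing to be careful about is the bookkeeping of which agent plays the role of "rich" and which plays "poor" after the sign flip, since the \EQ{1} definition is stated asymmetrically in terms of the strict inequality $v_i(A_i) < v_j(A_j)$ and then offers two escape clauses (shrink $A_j$, the richer bundle, or shrink $A_i$, the poorer bundle). The key point is that negation swaps the roles of the two agents in the pair but also swaps the two escape clauses in a matching way, so the correspondence is clean. I would present the argument for one direction in full and then note that applying it to $\mathcal{I}'$ (whose double negation is $\mathcal{I}$) yields the other direction, avoiding any duplication.
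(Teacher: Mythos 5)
Your proof is correct and follows essentially the same approach as the paper: unwind the \EQ{1} definition, note that negation flips the strict inequality and thereby swaps which agent is rich and which is poor, and observe that the two escape clauses (shrink the richer bundle, shrink the poorer bundle) interchange under this flip so the certificate carries over verbatim. The paper merely phrases the case split more compactly, packaging both clauses as ``there exists $S \subseteq A_i \cup A_j$ with $|S|\le 1$ and $v_j(A_j \setminus S) \ge v_i(A_i \setminus S)$'' before negating, whereas you write out the two disjuncts explicitly; the substance is identical.
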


\begin{proof}
    Suppose $\A$ is an \EQ{1} allocation for $\calI$. We shall show that $\A$ is also \EQ{1} for $\calI'$. Consider any two agents $i, j \in N$. If $-v_i(A_i) \ge -v_j(A_j)$, then $i$ is already equitable with respect to $j$ in $\calI'$.

    Now, suppose $-v_i(A_i) < -v_j(A_j)$. Then we have $v_j(A_j) < v_i(A_i)$. That is, $j$ is not equitable towards $i$ in $\calI$. Since $\A$ is \EQ{1} for $\calI$, there exists $S\subseteq A_j\cup A_i$, $|S|\le 1$ such that $v_j(A_j\setminus S) \ge v_i(A_i\setminus S)$. This implies that $-v_i(A_i\setminus S) \ge -v_j(A_j\setminus S)$. Thus, $i$ is equitable towards $j$ in $\calI'$ after removing $S$. Hence, $\A$ is \EQ{1} for $\calI'$.

    The argument is symmetric in the other direction. Thus, $\A$ is \EQ{1} for $\calI$ if and only if it is \EQ{1} for $\calI'$.
\end{proof}

Therefore, to show the existence of \EQ{1} allocations for the case of nonpositively valued grand bundle, it suffices to show the existence of \EQ{1} allocations for the case of nonnegatively valued grand bundle. Similarly, to show hardness results for nonpositively valued grand bundle, it suffices to show hardness results for nonnegatively valued grand bundle.

It is relevant to note here that such translations do not work in general for other fairness notions like proportionality, envy-freeness, and their approximations (Prop/PropX/EF/\EF{1}). An allocation is said to be Proportional (Prop) for goods (chores) if every agent ends up receiving at least (at most) $\frac{1}{n}$ of its value for the entire set of items. An allocation is PropX for goods (chores) if the addition (removal) of any one item suffices to achieve Proportionality. A PropX allocation may not even exist for goods \cite{AZIZ2020} but always exists for chores and is known to be efficiently computable for additive valuations \cite{Li2022}.
While the classical envy-cycle elimination algorithm \cite{LMM04} produces an \EF{1} allocation for the case of positively-valued items with monotone valuations, the same fails to work for chores \cite{Bhaskar2020}. 

An intuitive justification for why this holds for \EQ{1}, could be that the notion of equitability is inherently symmetric, unlike envy-freeness and proportionality. That is, an envy-free allocation (similarly a proportional allocation) can in some sense be \emph{more} than envy-free, where some agent value their bundle strictly more than another agent's bundle. When the valuations are negated, this relationship is reversed, and the allocation may no longer be envy-free. However, in an equitable allocation, all agents value their bundles equally, and negating the valuations preserves this equality.

Due to \Cref{thm:nonpos_to_nonneg}, all our positive and negative results for \EQ{1} allocations in previous sections extend to the case when the grand bundle is nonpositively valued by all agents. We summarize these results below:

\begin{corollary}
    \label{thm:two_agents_nonpos}
    Given a fair division instance $\GenInstance$ with two agents such that the grand bundle is valued nonpositively, an \EQ{1} allocation always exists and can be computed efficiently.
\end{corollary}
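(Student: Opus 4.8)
The plan is to derive \Cref{thm:two_agents_nonpos} as an immediate consequence of two results already established in the excerpt: \Cref{thm:two_agents}, which handles two agents whose grand bundle is valued nonnegatively, and \Cref{thm:nonpos_to_nonneg}, the translation lemma stating that an allocation is \EQ{1} for an instance if and only if it is \EQ{1} for the negated instance. The key observation is that negating both valuation functions turns a nonpositively valued grand bundle into a nonnegatively valued one while preserving both the set of agents and the set of items.

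Concretely, I would argue as follows. Let $\calI = \GenInstance$ be a two-agent instance with $v_i(M) \le 0$ for $i \in \{1,2\}$, and consider the instance $\calI' = \langle N, M, \{-v_i\}_{i \in N}\rangle$. Since $(-v_i)(M) = -v_i(M) \ge 0$ for both agents, $\calI'$ is a two-agent instance whose grand bundle is valued nonnegatively, so \Cref{thm:two_agents} applies: there is an \EQ{1} allocation $\A$ for $\calI'$, computable in polynomial time (note the oracle for $-v_i$ is obtained from the oracle for $v_i$ at no asymptotic cost). By \Cref{thm:nonpos_to_nonneg}, this same $\A$ is \EQ{1} for $\calI$. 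This establishes existence, and since Algorithm~\ref{alg:two_agents} run on $\calI'$ terminates in polynomial time, it also establishes efficient computability for $\calI$.

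There is essentially no obstacle here; the only point that merits a sentence is checking that the reduction is efficient, i.e., that simulating a valuation oracle for $-v_i$ given an oracle for $v_i$ costs $O(1)$ per call, so the polynomial-time guarantee transfers. One could also remark, for completeness, that the negation operation is an involution, so no information is lost, but that is already implicit in the ``if and only if'' of \Cref{thm:nonpos_to_nonneg}. I expect the proof to be three or four lines long.

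\begin{proof}[Proof of \Cref{thm:two_agents_nonpos}]
    Given an instance $\calI = \GenInstance$ with $n = 2$ and $v_i(M) \le 0$ for $i \in \{1,2\}$, consider the negated instance $\calI' = \langle N, M, \{-v_i\}_{i \in N}\rangle$. For each agent $i$ we have $(-v_i)(M) = -v_i(M) \ge 0$, so $\calI'$ is a two-agent instance in which both agents value the grand bundle nonnegatively. By \Cref{thm:two_agents}, an \EQ{1} allocation $\A$ for $\calI'$ exists and can be computed in polynomial time (a value oracle for $-v_i$ is obtained from the oracle for $v_i$ with only constant overhead, so running Algorithm~\ref{alg:two_agents} on $\calI'$ takes polynomial time). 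By \Cref{thm:nonpos_to_nonneg}, $\A$ is \EQ{1} for $\calI$ as well. Hence an \EQ{1} allocation for $\calI$ exists and can be computed efficiently.
\end{proof}
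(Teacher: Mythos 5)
Your proof is correct and is essentially the same as the paper's: both negate the valuations, invoke \Cref{thm:two_agents} on the resulting nonnegative-grand-bundle instance, and transfer the allocation back via \Cref{thm:nonpos_to_nonneg}. The only addition is your explicit remark about the oracle for $-v_i$, which the paper leaves implicit.
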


\begin{proof}
    Given a fair division instance $\GenInstance$ with two agents such that the grand bundle is valued nonpositively, consider the instance $\calI' = \langle N, M, \{-v_i\}_{i \in N} \rangle$. Note that in $\calI'$, the grand bundle is valued nonnegatively by both agents. By Theorem~\ref{thm:two_agents}, there exists an \EQ{1} allocation $\A$ for $\calI'$, which can be computed efficiently. By \Cref{thm:nonpos_to_nonneg}, $\A$ is also an \EQ{1} allocation for $\GenInstance$.
\end{proof}

Similarly, due to \cref{thm:nonpos_to_nonneg} and Theorem~\ref{thm:nonneg}, we have the following corollary.

\begin{corollary}\label{thm:nonpos}
Given a fair division instance $\GenInstance$ with nonpositive valuations, an \EQ{1} allocation always exists. Furthermore, if $|M|\ge |N|$, then each agent can be guaranteed a non-empty bundle. 
\end{corollary}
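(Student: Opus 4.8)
\textbf{Proof proposal for \Cref{thm:nonpos}.}

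The plan is to use the negation trick of \Cref{thm:nonpos_to_nonneg} to reduce directly to the nonnegative case already established in \Cref{thm:nonneg}. Given an instance $\GenInstance$ with nonpositive valuations, i.e. $v_i(S) \le 0$ for every $i \in N$ and every $S \subseteq M$, I would pass to the instance $\calI' = \langle N, M, \{-v_i\}_{i \in N}\rangle$. The first observation is that $\calI'$ is an instance with nonnegative valuations: since $v_i(S) \le 0$ for all $S$, we have $-v_i(S) \ge 0$ for all $S$, so each $-v_i$ is a nonnegative valuation function in exactly the sense required by \Cref{thm:nonneg}.

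Now I would apply \Cref{thm:nonneg} to $\calI'$. This yields an allocation $\A$ that is \EQ{1} for $\calI'$, and moreover, when $|M| \ge |N|$, the theorem guarantees that $\A$ can be taken so that every agent receives a non-empty bundle. By \Cref{thm:nonpos_to_nonneg}, the \EQ{1} property of $\A$ transfers back: $\A$ is \EQ{1} for $\calI'$ if and only if it is \EQ{1} for $\GenInstance$. Since the negation trick does not touch the bundles of the allocation, only how they are valued, the non-emptiness of each $A_i$ is preserved verbatim. Hence $\A$ is the desired \EQ{1} allocation for $\GenInstance$, with non-empty bundles when $|M| \ge |N|$.

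There is essentially no obstacle here: the entire content of the statement is packaged into \Cref{thm:nonpos_to_nonneg} and \Cref{thm:nonneg}, and this corollary is a two-line composition of them, exactly parallel to the proof of \Cref{thm:two_agents_nonpos} given just above. The only point worth stating explicitly is the sign bookkeeping—that nonpositivity of $v_i$ is equivalent to nonnegativity of $-v_i$—which is immediate. Note also that, unlike \Cref{thm:nonneg}, this corollary does not assert a \emph{polynomial-time} algorithm; it is purely an existence statement, inheriting whatever (possibly exponential) running time Algorithm~\ref{alg:global_non_neg} has, since the negation preserves oracle access but nothing more.
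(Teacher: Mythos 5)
Your proof is correct and follows exactly the paper's own route: negate the valuations to land in the nonnegative case, apply \Cref{thm:nonneg}, and transfer the \EQ{1} property back via \Cref{thm:nonpos_to_nonneg}, noting that negation leaves the bundles (and hence their non-emptiness) unchanged. The paper gives precisely this one-line justification, so your proposal matches it.
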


Furthermore, we know that if $v$ is a submodular function, then $-v$ is supermodular. Therefore, we can state the following corollaries. The proofs of the below corollaries follow from \cref{thm:nonpos_to_nonneg} combined with Theorem~\ref{thm:submodular} and Theorem~\ref{thm:doublymonotone} respectively.

\begin{corollary}\label{thm:supermodular_nonpos}
Given a fair division instance $\GenInstance$ with supermodular valuations and every agent values the grand bundle nonpositively, an \EQ{1} allocation always exists and can be computed in polynomial time.
\end{corollary}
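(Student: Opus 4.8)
\textbf{Proof proposal for \Cref{thm:supermodular_nonpos}.}

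The plan is to reduce to the nonnegative-grand-bundle case via the negation trick established in \Cref{thm:nonpos_to_nonneg}, exactly mirroring the proof of \Cref{thm:two_agents_nonpos}. Given a fair division instance $\GenInstance$ in which every agent has a supermodular valuation and $v_i(M) \le 0$ for all $i \in N$, I would pass to the instance $\calI' = \langle N, M, \{-v_i\}_{i \in N}\rangle$. The first observation is that negation converts the grand-bundle sign: $-v_i(M) \ge 0$ for all $i$, so the grand bundle is valued nonnegatively in $\calI'$. The second observation is the duality between the two modularity notions: if $v_i$ is supermodular, meaning $v_i(S \cup e) - v_i(S) \le v_i(T \cup e) - v_i(T)$ whenever $S \subseteq T$ and $e \notin T$, then multiplying through by $-1$ gives $(-v_i)(S \cup e) - (-v_i)(S) \ge (-v_i)(T \cup e) - (-v_i)(T)$, i.e., $-v_i$ is submodular. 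Hence $\calI'$ is an instance with submodular valuations and nonnegatively valued grand bundle.

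Now I would invoke \Cref{thm:submodular}: since $\calI'$ has submodular valuations and every agent values the grand bundle nonnegatively, an \EQ{1} allocation $\A$ for $\calI'$ exists and can be computed in polynomial time (via Algorithm~\ref{alg:marginal_witness}, using that submodular valuations satisfy the marginal witness property). Finally, by \Cref{thm:nonpos_to_nonneg}, the allocation $\A$ is \EQ{1} for $\calI'$ if and only if it is \EQ{1} for the original instance $\GenInstance$; therefore $\A$ is an \EQ{1} allocation for $\GenInstance$. The transformation from $\GenInstance$ to $\calI'$ is trivially polynomial time (it only negates oracle outputs), so the overall procedure remains polynomial time.

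There is essentially no obstacle here: the only points requiring (a sentence of) care are verifying that supermodular negates to submodular and that the grand-bundle sign flips, both of which are immediate from the definitions. Everything substantive was already carried out in the proofs of \Cref{thm:submodular} and \Cref{thm:nonpos_to_nonneg}, so this corollary is a one-paragraph chaining argument analogous to \Cref{thm:two_agents_nonpos}.
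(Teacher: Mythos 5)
Your proposal is correct and matches the paper's approach exactly: negate the valuations via \Cref{thm:nonpos_to_nonneg}, observe that supermodular becomes submodular and the grand-bundle sign flips to nonnegative, and then invoke \Cref{thm:submodular}. No gaps.
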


\begin{corollary}\label{thm:doublymonotone_nonpos}
Given a fair division instance $\GenInstance$ with doubly monotone valuations and every agent values the grand bundle nonpositively, an \EQ{1} allocation always exists and can be computed in polynomial time.
\end{corollary}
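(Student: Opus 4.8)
The final statement to prove is Corollary~\ref{thm:doublymonotone_nonpos}, which asserts that for doubly monotone valuations where every agent values the grand bundle nonpositively, an \EQ{1} allocation exists and is polynomial-time computable. The plan is to reduce this to the already-established nonnegative case via the negation trick of \Cref{thm:nonpos_to_nonneg}, exactly in the same spirit as the proofs of \Cref{thm:two_agents_nonpos}, \Cref{thm:nonpos}, and \Cref{thm:supermodular_nonpos}.

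Concretely, given an instance $\calI = \GenInstance$ with doubly monotone valuations and $v_i(M) \le 0$ for all $i \in N$, I would form the negated instance $\calI' = \langle N, M, \{-v_i\}_{i \in N}\rangle$. First I would check that each $-v_i$ is again doubly monotone: if $v_i$ is doubly monotone with goods set $G_i$ and chores set $C_i$, then for $-v_i$ the marginal of any $g \in G_i$ becomes nonpositive and the marginal of any $c \in C_i$ becomes nonnegative, so $-v_i$ is doubly monotone with the roles swapped, i.e., goods set $C_i$ and chores set $G_i$. Second, $-v_i(M) = -v_i(M) \ge 0$ since $v_i(M) \le 0$, so in $\calI'$ every agent values the grand bundle nonnegatively. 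Hence \Cref{thm:doublymonotone} applies to $\calI'$, yielding an \EQ{1} allocation $\A$ for $\calI'$ computable in polynomial time (the negated oracle is evaluated with one extra sign flip per query, so the running time is preserved up to constants). Finally, by \Cref{thm:nonpos_to_nonneg}, $\A$ is \EQ{1} for the original instance $\calI$ as well, completing the proof.

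There is essentially no obstacle here; the only point requiring a sentence of care is verifying that the negation of a doubly monotone function is doubly monotone, which is immediate once one swaps the goods and chores partition. Everything else is a direct citation of \Cref{thm:doublymonotone} and \Cref{thm:nonpos_to_nonneg}, mirroring the structure of the preceding corollaries in \Cref{sec:nonpos}.
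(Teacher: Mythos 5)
Your proposal matches the paper's own proof: the paper states that Corollary~\ref{thm:doublymonotone_nonpos} follows from \Cref{thm:nonpos_to_nonneg} combined with \Cref{thm:doublymonotone}, which is exactly the negate-and-transfer argument you give. Your explicit verification that $-v_i$ is doubly monotone with the goods/chores partition swapped is the right sanity check and is implicit in the paper's one-line justification.
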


Similarly, if a valuation function is subadditive, then its negation is superadditive. Therefore, we have the following corollary, due to \cref{thm:nonpos_to_nonneg} and Theorem~\ref{thm:identical_subadd}.

\begin{corollary}\label{thm:superadditive_nonpos}
Given a fair division instance $\GenInstance$ with identical superadditive valuations such that the grand bundle is valued nonpositively, an \EQ{1} allocation always exists. 
\end{corollary}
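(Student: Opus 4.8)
The statement to prove is Corollary~\ref{thm:superadditive_nonpos}, which asserts the existence of an \EQ{1} allocation for identical superadditive valuations when the grand bundle is valued nonpositively.

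\medskip

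The plan is to invoke the reduction established in Lemma~\ref{thm:nonpos_to_nonneg} together with Theorem~\ref{thm:identical_subadd}, exactly as was done for the other nonpositive-case corollaries (\Cref{thm:two_agents_nonpos}, \Cref{thm:nonpos}, \Cref{thm:supermodular_nonpos}, \Cref{thm:doublymonotone_nonpos}). Concretely, given a fair division instance $\GenInstance$ in which all agents share a common superadditive valuation $v$ with $v(M) \le 0$, I would pass to the negated instance $\calI' = \langle N, M, \{-v\}_{i \in N}\rangle$. The first step is to check that $\calI'$ satisfies the hypotheses of Theorem~\ref{thm:identical_subadd}: the agents in $\calI'$ still share an identical valuation (namely $-v$); this common function is subadditive precisely because $v$ is superadditive (for disjoint $S,T$, $v(S \cup T) \ge v(S) + v(T)$ is equivalent to $-v(S \cup T) \le -v(S) + -v(T)$); and the grand bundle is valued nonnegatively in $\calI'$ since $-v(M) \ge 0$.

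\medskip

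The second step is to apply Theorem~\ref{thm:identical_subadd} to $\calI'$, obtaining an \EQ{1} allocation $\A$ for $\calI'$ (indeed, produced by \Cref{alg:identical_subadd}). The third and final step is to transport this back: by Lemma~\ref{thm:nonpos_to_nonneg}, an allocation is \EQ{1} for $\calI'$ if and only if it is \EQ{1} for $\GenInstance = \calI$, so $\A$ is an \EQ{1} allocation for the original instance. This completes the argument.

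\medskip

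There is essentially no obstacle here: the only thing to verify carefully is the duality between superadditivity and subadditivity under negation, which is immediate from the definitions (and uses the paper's convention that sub/superadditivity is required only for disjoint subsets, so that no positivity assumption sneaks in). One minor point worth noting in the write-up is that the corollary, unlike Theorem~\ref{thm:identical_subadd}, does not assert polynomial-time computability — and this is appropriate, since \Cref{alg:identical_subadd} in the case $|R| > n$ invokes the (exponential-time) loop of \Cref{alg:global_non_neg} and, moreover, computing a maximum-cardinality negatively-valued subset $L$ need not be efficient; hence the statement is purely existential, matching the table entry in Figure~\ref{fig:venn_comparison}.
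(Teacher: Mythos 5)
Your proposal is correct and follows exactly the paper's approach: negate the common valuation, observe that $-v$ is subadditive with $-v(M)\ge 0$, apply Theorem~\ref{thm:identical_subadd}, and transport the result back via Lemma~\ref{thm:nonpos_to_nonneg}. Your extra remarks on the disjoint-subsets convention and on the absence of a polynomial-time claim are accurate observations, but the core argument matches the paper.
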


Similarly, the hardness result from Theorem~\ref{thm:hard_submodular} extends to the case when the grand bundle is nonpositively valued by all agents.

\begin{corollary} \label{thm:hard_submodular} For valuations where each agent values the grand bundle nonpositively, it is NP-Hard to decide whether an \EQ{1} allocation exists, even when the valuations are submodular.
\end{corollary}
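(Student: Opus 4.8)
The statement to prove is Corollary~\ref{thm:hard_submodular}: for valuations where each agent values the grand bundle nonpositively, it is NP-hard to decide whether an \EQ{1} allocation exists, even for submodular valuations.

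The plan is to reduce this directly from Theorem~\ref{thm:hard_supermodular}, the analogous hardness result for supermodular valuations with a nonnegatively-valued grand bundle, using the negation machinery of Lemma~\ref{thm:nonpos_to_nonneg}. Concretely, given an instance $\GenInstance$ of the problem from Theorem~\ref{thm:hard_supermodular} with supermodular valuations and $v_i(M)\ge 0$ for all $i$, I would form the instance $\calI' = \langle N, M, \{-v_i\}_{i\in N}\rangle$. The key structural observations are: (i) if $v_i$ is supermodular then $-v_i$ is submodular (immediate from the definitions of increasing versus diminishing marginal returns given in the preliminaries), and (ii) if $v_i(M)\ge 0$ then $(-v_i)(M) = -v_i(M)\le 0$, so in $\calI'$ every agent values the grand bundle nonpositively. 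Thus $\calI'$ is a valid instance of the submodular/nonpositive-grand-bundle problem. By Lemma~\ref{thm:nonpos_to_nonneg}, an allocation $\A$ is \EQ{1} for $\calI$ if and only if it is \EQ{1} for $\calI'$, so $\calI$ admits an \EQ{1} allocation exactly when $\calI'$ does. Since the transformation $\calI\mapsto\calI'$ is clearly polynomial-time (negating oracle outputs), this is a polynomial-time many-one reduction, and NP-hardness transfers.

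The one thing to verify carefully is that the reduction stays within the oracle model: the problem is stated with polynomial-time oracle access to valuations, and negating the oracle for $v_i$ gives a polynomial-time oracle for $-v_i$, so this is unproblematic. I should also note why we only claim NP-hardness here rather than NP-completeness as in Theorem~\ref{thm:hard_supermodular}: membership in NP in the oracle setting is slightly delicate, and since the statement of the corollary only asserts NP-hardness, I do not need to address it; if desired, one could remark that an analogous certificate argument gives completeness.

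I expect no real obstacle here --- the content is entirely packaged by Lemma~\ref{thm:nonpos_to_nonneg} together with the elementary duality between supermodularity and submodularity under negation. The ``main step,'' such as it is, is simply checking that negation maps the source valuation class (supermodular, nonnegative grand bundle) onto the target class (submodular, nonpositive grand bundle), after which the \EQ{1}-preservation lemma does all the work. I would write this as a short proof: state the transformation, invoke closure of submodularity/supermodularity under negation and the sign flip on $v_i(M)$, then apply Lemma~\ref{thm:nonpos_to_nonneg} to conclude that \EQ{1}-existence is preserved, hence the decision problem is NP-hard.
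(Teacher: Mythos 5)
Your argument is correct and matches the paper's intended proof exactly: negate the supermodular/nonnegative-grand-bundle instance from Theorem~\ref{thm:hard_supermodular}, observe that negation maps supermodular to submodular and flips the sign of $v_i(M)$, and invoke Lemma~\ref{thm:nonpos_to_nonneg} to transfer \EQ{1}-existence. The paper gives only a one-line remark to this effect; your write-up fills in the same routine steps.
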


\section{Conclusion}\label{sec:conclusion}

In this work, we studied the problem of finding an equitable up to one item (\EQ{1}) allocation for indivisible items under general valuations. Our findings resolve open questions in equitable division and extend the existential guarantees to much richer valuation models, under the reasonable assumption that agents agree on the sign of their value for the grand bundle. While our algorithm for submodular valuations and doubly monotone valuations runs in polynomial time, our algorithm for nonnegative valuations takes exponential time. Whether this latter case admits an efficient algorithm remains an open question. Another interesting question is whether \EQ{1} exists for non-identical subadditive valuations. Exploring the stronger relaxation of equitability up to \emph{any} item (EQX) in this context is also an immediate direction. Finally, characterizing instances where such allocations are compatible with efficiency guarantees presents an exciting direction for future research.
\section*{Acknowledgements}
HH acknowledges the support from the National Science Foundation (NSF) through CAREER Award IIS-2144413 and Award IIS-2107173. HV acknowledges the support from TCS RSP Fellowship. AS acknowledges the support from Walmart Center for Tech Excellence (CSR WMGT-23-0001). JY acknowledges the support from Google PhD fellowship.

\printbibliography[title={References}]

\appendix
\clearpage
\appendix
{\centering\LARGE{\textbf{Appendix}}}

\section{Missing Proofs from Section \ref{sec:general_agents}}
\label{sec:hard_appendix}
Our proof of~\Cref{thm:hard_supermodular} uses a reduction from \RestrictedPartition, which we define below. 

\begin{definition}[\RestrictedPartition]
        In this problem, we are given an input multiset $M' = \{b_1,\ldots,b_m\}$ of positive integers with total sum $T = \sum_{i=1}^m b_i$ and each element strictly less than one quarter of the total, i.e., $0 < b_i < T/4$ for all $i \in [m]$. The goal is to decide whether there exists a partition of the index set $[m]$ into two disjoint subsets $S_1, S_2$ (so $S_1 \cup S_2 = [m]$, $S_1 \cap S_2 = \emptyset$) such that
        \[
        \sum_{i \in S_1} b_i = \sum_{j \in S_2} b_j = T/2.
        \]
        A yes-instance admits such a partition; otherwise it is a no-instance. Note that \RestrictedPartition{} differs from the standard \Partition{} problem only in the constraint $b_i < T/4$ on the input.
    \end{definition}

\begin{lemma} \label{lem:rest-part}
\RestrictedPartition{} is NP-Hard.
\end{lemma}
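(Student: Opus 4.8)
The plan is to reduce from the standard \Partition{} problem, which is well known to be NP-hard. Given an instance of \Partition{} --- a multiset $\{a_1,\dots,a_k\}$ of positive integers with total sum $\Sigma$ --- I would pad it with a few additional elements to force every element to be strictly below one quarter of the new total, while preserving the existence (or non-existence) of an equal-sum partition. The natural padding is to append several copies of the value $\Sigma$ itself: let the new multiset be $M' = \{a_1,\dots,a_k\} \cup \{\Sigma, \Sigma, \dots, \Sigma\}$ with, say, $c$ copies of $\Sigma$ added, so that the new total is $T = \Sigma + c\Sigma = (c+1)\Sigma$.

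The key step is choosing $c$ so that two things hold simultaneously: (i) every element of $M'$ is strictly less than $T/4$, and (ii) the padded instance is a yes-instance of \RestrictedPartition{} if and only if the original is a yes-instance of \Partition{}. For (i), the largest element is $\Sigma$ (assuming, without loss of generality after a trivial rescaling or observation, that each $a_i \le \Sigma$, which always holds since the $a_i$ are positive and sum to $\Sigma$, with the degenerate one-element case handled separately), so we need $\Sigma < T/4 = (c+1)\Sigma/4$, i.e. $c \ge 4$; take $c = 4$, giving $T = 5\Sigma$. For (ii): if the original admits a partition into two halves each summing to $\Sigma/2$, then distributing the four copies of $\Sigma$ two-and-two yields two parts each summing to $\Sigma/2 + 2\Sigma = T/2$, so $M'$ is a yes-instance. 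Conversely, any equal-sum partition of $M'$ splits into two parts of sum $T/2 = 5\Sigma/2$; since $5\Sigma/2$ is not a multiple of $\Sigma$ while any collection of copies of $\Sigma$ sums to a multiple of $\Sigma$, each part must contain some of the original $a_i$'s, and a short counting/parity argument on how the four $\Sigma$-copies can be split (they must split $2$--$2$, since a $3$--$1$ or $4$--$0$ split would force the $a_i$-contribution to one side to be negative or exceed $\Sigma$) shows the original $a_i$'s are themselves split into two halves each summing to $\Sigma/2$. Hence $M'$ is a yes-instance of \RestrictedPartition{} iff the original is a yes-instance of \Partition{}, and the reduction is clearly polynomial-time.

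The main obstacle --- really the only delicate point --- is the converse direction of the correctness argument: ruling out that the $\Sigma$-copies get split unevenly in a way that still balances the two sides. I would handle this by the divisibility observation above (each side's total is $5\Sigma/2$, which is a half-integer multiple of $\Sigma$, forcing a nonzero and in fact exactly-half contribution from the original elements to each side) together with the bound $\sum_i a_i = \Sigma$, which leaves no room for a $3$--$1$ or $4$--$0$ split of the copies. One should also dispatch trivial edge cases: if $k = 1$ the original \Partition{} instance is a no-instance and one must check the padded instance is too (with one original element $a_1 = \Sigma$ and four copies of $\Sigma$, total $T = 5\Sigma$, $T/2 = 5\Sigma/2$ is not attainable as a sub-sum since all elements equal $\Sigma$), and one must confirm $0 < b_i$ for all $i$, which is immediate.
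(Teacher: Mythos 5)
Your proposal is correct and takes essentially the same approach as the paper: reduce from \Partition{} by appending four copies of the total sum $\Sigma$, so the new total is $5\Sigma$ and every element is at most $\Sigma < 5\Sigma/4$; in the converse direction, observe that the contribution $c\Sigma$ of the padding copies to a side of sum $5\Sigma/2$ forces $c=2$ because the contribution of the original elements lies in $[0,\Sigma]$. The only difference is that you explicitly dispatch the $k=1$ edge case, which the paper leaves implicit (it still goes through there, since all five elements equal $\Sigma$ and no sub-sum hits $5\Sigma/2$).
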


\begin{proof}
    We give a reduction from the \Partition{} problem. Suppose that we are given $n$ integers $a_1, a_2, \ldots a_n$ and we need to find if there exists a subset with sum equal to half of the total sum $T = \sum_{i \in n} a_i$. Let us add $4$ copies of $T$ to $a$ to create a list $b$ with $n + 4$ integers, such that $b_i = a_i$ for all $i \in [n]$ and $b_i = T$ for all $i \in \{n+1,n+2,n+3,n+4\}$. Clearly, the total sum of $b$ equals $5T$ and $b$ is a valid instance of the \RestrictedPartition, since every value in $b$ is at most $T$, which is strictly smaller than a quarter of the total sum. 
    
    If the original instance was a ``Yes'' instance of the partition problem, then $ \exists S \subseteq [n]$ such that $\sum_{i \in S} a_i = \frac{T}{2}$. Hence, $\sum_{i \in S \cup \{n+1,n+2\}} b_i = \frac{5T}{2}$ and the reduced instance is a ``Yes'' instance of the restricted partition problem.
    
    If the reduced instance is a ``Yes'' instance of the restricted partition problem, then $\exists S \subseteq [n+4]$ such that $\sum_{i \in S} b_i = \frac{5T}{4}$. Let $c = |S \cap \{n+1,n+2,n+3,n+4\}|$. Then, $(\sum_{i \in S \cap [n]} a_i) + c \cdot T = 2T + \frac{T}{2}$. Now, $c$ must be equal to $2$ since $\sum_{i \in S \cap [n]} a_i \in [0, T]$. Hence, $\sum_{i \in S \cap [n]} a_i = \frac{T}{2}$, and the original instance must have been a ``Yes'' instance of the partition problem.
\end{proof}

We are now ready to prove \Cref{thm:hard_supermodular}.

\hardness*

\begin{proof} Given a \RestrictedPartition~instance in the form of a multiset $M' = \{b_1, \ldots, b_m\}$ where $m \geq 5$, we construct a fair division instance $\calI$ as follows. We create $n=3$ agents and a set $M = \{1, 2, \ldots, m\}$ of $m$ items. Agents $1$ and $2$ have identical valuations. In particular, for any subset $S \subseteq M$, we have

    \[
    v_1(S) = v_2(S) = \begin{cases} 0 & \text{if } S = \emptyset \\
    2\left(\sum_{j \in S} b_j\right) - T, & \text{ otherwise} \\
    
\end{cases}
\]

For agent $3$, we have $v_3(S) = |S|$ for all $S \subseteq M$.

Note that $\{v_i\}_{i \in [n]}$ value the grand bundle ($M$) nonnegatively, and are supermodular. The non-negativity is immediate since $v_1(M) = v_2(M) = 2T - T = T > 0$ and $v_3(M) = m > 0$. The submodularity of $v_3$ follows from the fact that $v_3$ is additive. To see that $v_1$ and $v_2$ are supermodular, we consider any agent $i \in \{1, 2\}$, consider any subsets $S \subset S' \subseteq M$ and an item $e \in M \setminus S'$. Clearly, $v_i(S' \cup e) - v_i(S') = 2b_e$ since $S' \neq \emptyset$. Also, $v_i(S \cup e) - v_i(S) = 2 b_e - T$ if $S = \emptyset$ and $2b_e$ otherwise. In both cases, we have $v_i(S' \cup e) - v_i(S') \geq v_i(S \cup e) - v_i(S)$.

This completes the construction of the fair-division instance $\calI$. We now show that $M'$ is a yes-instance of \RestrictedPartition~if and only if $\calI$ is a yes-instance of deciding the existence of an \EQ{1} allocation.

\paragraph{Forward Direction.} Suppose $M'$ is a yes-instance of \RestrictedPartition. Then, there exists a partition $[m] = S_1 \cup S_2$ such that $\sum_{i \in S_1} b_i = \sum_{j \in S_2} b_j = \frac{T}{2}$. Consider the following allocation $\calA$ where $A_1 = S_1, A_2 = S_2$ and $A_3 = \emptyset$. Note that $v_i(A_i) = 0$ for all $i \in [3]$ and, therefore, $\calA$ is an \EQ{1} allocation.

\paragraph{Reverse Direction.} Suppose the fair-division $\calI$ instance is a yes-instance and let $\calA = (A_1, A_2, A_3)$ be the corresponding \EQ{1} allocation. We will show that $M'$ is a yes-instance of \RestrictedPartition. We will say that an ordered pair of agents $(i, j)$ violates \EQ{1}, $v_i(A_i) < v_j(A_j)$ and for all $e \in A_i \cup A_j$, $v_i(A_i \setminus e) < v_j(A_j \setminus e)$. Clearly, since $\calA$ is an \EQ{1} allocation, no ordered pair of agents violates \EQ{1}.

We consider the following cases:

\begin{enumerate}

\item $A_1 = \emptyset$ or $A_2 = \emptyset$. Suppose $A_1 = \emptyset$. If $|A_3| > 1$, then $v_3(A_3 \setminus e) > 0$ for all $e \in A_3$ and the pair $(1,3)$ violates \EQ{1}. Hence, $|A_3| \leq 1$. This implies that $|A_2| \geq m-1$ since $A_1 = \emptyset$. For any $e \in A_2$ (note that $A_2 \setminus e \neq \emptyset$ since $m \geq 5$), we have
 \begin{align*} 
    v_2(A_2 \setminus e) &= 2 \left( \sum_{k \in A_2 \setminus e} b_k \right) - T \\
    & = 2 \left( \sum_{k \in M} b_k - b_e - \sum_{f \in A_3} b_f \right) - T && \because A_1 = \emptyset\\
    & > 2(T - T/4 - T/4) - T && \because |A_3| \leq 1, \max_{e \in M} b_e < T/4\\
    & = 0
 \end{align*}

 Hence, the pair $(1,2)$ violates \EQ{1}, a contradiction. Therefore, $A_1 \neq \emptyset$. By symmetry, we have $A_2 \neq \emptyset$. Therefore, we have

 \begin{align*}
    v_1(A_1) + v_2(A_2) & = 2 \left( \sum_{e \in A_1} b_e \right) - T + 2 \left( \sum_{f \in A_2} b_f \right) - T  \\
    & =  2 \left( \sum_{k \in M} b_i - \sum_{g \in A_3} b_g \right) - 2T \\
    & = -2 \left( \sum_{g \in A_3} b_g \right)
\end{align*}



 In the rest of the analysis, we will assume without loss of generality that $v_1(A_1) \leq v_2(A_2)$.

\item Suppose $A_1, A_2, A_3 \neq \emptyset$. Then, from the above equation, we have $2 v_1(A_1) \leq - \left( \sum_{k \in A_3} b_k \right) < 0$. Note that $v_3(A_3) > 0$ and $v_3(A_3 \setminus e) \geq 0 > v_1(A_1)$ for all $e \in A_3$. Similarly, $v_1(A_1 \setminus e) \leq 0 < v_3(A_3)$ for all $e \in A_1$. Thus, pair $(1,3)$ violates \EQ{1}, a contradiction.

\item $A_1, A_2 \neq \emptyset$ and $A_3 = \emptyset$. In this case, we have $v_1(A_1) + v_2(A_2) = 0$. If $v_1(A_1) = v_2(A_2) = 0$.  First, suppose $v_1(A_1) < 0, v_2(A_2) = -v_1(A_1) > 0$. Then, we consider the following two sub-cases:

\begin{enumerate}
    \item $|A_1| > 1$. In this case, the pair $(1,3)$ violates \EQ{1} since $v_1(A_1) < 0 = v_3(A_3)$ and $v_1(A_1 \setminus e) = v_1(A_1) - 2b_e < 0 = v_3(A_3)$ for all $e \in A_1$.
    \item $|A_1| = 1$. This implies $|A_2| = m-1$. Then, for any $e \in A_2$ (note that $A_2 \setminus e \neq \emptyset$ since $m \geq 5$), we have
 \begin{align*} 
    v_2(A_2 \setminus e) &= 2 \left( \sum_{j \in A_2 \setminus e} b_j \right) - T \\
    & = 2 \left( \sum_{j \in M} b_j - b_e - \sum_{f \in A_1} b_f \right) - T &&  (A_3 = \emptyset)\\
    & > 2(T -   T/4 - T/4) - T &&  (|A_1| = 1, \max_{e \in M}b_e < T/4)\\
    & = 0
 \end{align*}
    Hence, the pair $(1,2)$ violates \EQ{1}, a contradiction.
\end{enumerate}

So, it must be that $v_1(A_1) = v_2(A_2) = 0$. Therefore, \RestrictedPartition~ is a yes-instance with $S_1= A_1, S_2 = A_2$. Indeed, $$v_1(A_1) = 2 \left( \sum_{i \in A_1} b_i \right) - T = 0 \Rightarrow \sum_{i \in A_1} b_i = \frac{T}{2} $$

\paragraph{Explicit Non-Existence Instance.}
Consider the \RestrictedPartition{} instance $M'=\{1,1,1,1,1\}$. Here $T=5$ and each $b_i=1 < T/4=1.25$, so it is a valid restricted instance. Since $T$ is odd, there is no partition of $M'$ into two subsets of equal sum; hence it is a no-instance.

Applying the reduction, we obtain a fair-division instance with $m=5$ items and three agents. For agents $1,2$ the valuation becomes
\[
v_1(S)=v_2(S)=
\begin{cases}
0 & S=\emptyset,\\
2|S|-5 & S\neq \emptyset,
\end{cases}
\qquad
v_3(S)=|S|.
\]
By the proven correctness of the reduction (\EQ{1} exists iff the restricted partition instance is a yes-instance), this fair-division instance admits no \EQ{1} allocation. This gives a concrete $3$-agent supermodular example (all value the grand bundle nonnegatively) witnessing non-existence of \EQ{1}. Also, note that, by introducing $n-3$ additional copies of agent $3$, we get that, for any $n \geq 3$, there exists an instance with $n$ agents having supermodular valuations with a nonnegative value for the grand bundle, where an \EQ{1} allocation does not exist.

\end{enumerate}
\end{proof}

\section{Missing proofs from Section~\ref{sec:submodular_dmonotone}} \label{apndx:submod_dmon}

In this section, we show that both submodular and doubly monotone valuations satisfy marginal witness property.

\submodular*
\begin{proof}
    It suffices to prove that submodular valuations satisfy the marginal witness property. Consider a submodular valuation function $v: 2^M \to \mathbb{R}$. Let $A, B \subseteq M$ be two disjoint subsets such that $B = \{b_1, b_2, \ldots b_k\} \neq \phi$ and $v(A\cup B) \geq v(A)$.

    Then, 

    \begin{align*}
    0 & \leq v_i(A \cup B) - v_i(A)\\
    &= \sum_{j=1}^{k} \left( v_i(A \cup \{b_1, \ldots b_j\}) - v_i(A \cup \{b_1, \ldots b_{j-1}\}) \right)\\
    & \leq \sum_{j=1}^{k} \left( v_i(A \cup b_j) - v_i(A)\right)\\
    & \leq t \cdot \max_{j=1}^{t} \left( v_i(A \cup b_j) - v_i(A)\right)
\end{align*}
\end{proof}

\doublymonotone*
\begin{proof}
    It suffices to prove that doubly monotone valuations satisfy the marginal witness property. Consider a doubly monotone function $v: 2^M\to\mathbb{R}$, and two disjoint subsets $A, B \subseteq M$, such that $B = \{b_1,\ldots b_k\} \neq \phi$ and $v(A\cup B) \geq v(A)$.

    If $v(A \cup b_i) < v(A)$ for all $i \in \{1, 2, \ldots k\}$, then, since $v$ is doubly monotone, $v(X \cup b_i) \leq v(X) $ for all $X \subseteq M$. Then, we arrive at a contradiction:

    \begin{align*}
        v(A \cup B) &= v(A \cup \{b_1\}) + \sum_{i=2}^{k} (v((A \cup \{b_1, \ldots b_{i-1}\}) \cup \{b_i\}) - v(A \cup \{b_1, \ldots b_{i-1}\} )\\
        & < v(A) + \sum_{i=2}^{k} 0 = v(A).
    \end{align*}
\end{proof}
\section{Missing Proofs from Section~\ref{sec:nonneg}}\label{app:submodular_polytime}

\graphs*

\begin{proof}
     We first modify the Algorithm~\ref{alg:marginal_witness} as follows: before starting the while loop, allocate one item each to the agents (the choice of the $n$ items allocated does not matter). With this modification, note that the invariants \textbf{Invariant 1} and \textbf{Invariant 2} defined in ~\Cref{lemma:inv_marg} hold true for this initial allocation. Hence, the result follows. 
\end{proof}


\end{document}